\numberwithin{equation}{section}
\newtheorem{dfn}{Definition}[section]
\newtheorem{rmk}{Remark}[section]
\newtheorem{thm}{Theorem}[section]
\newtheorem{prop}[thm]{Proposition}
\begin{document}

\title{{Discrete Exterior Geometry Approach\\to Structure-Preserving Discretization of\\Distributed-Parameter Port-Hamiltonian Systems}}

 \author[*]{Marko Seslija}
 \author[$\dag$]{Arjan van der Schaft}
 \author[*]{Jacquelien M.A. Scherpen}

 \affil[*]{Department of Discrete Technology and Production Automation, Faculty of Mathematics and Natural Sciences, University of Groningen, Nijenborgh 4, 9747 AG Groningen, The Netherlands, e-mail:~\{M.Seslija,\;J.M.A.Scherpen\}@rug.nl}
 \affil[$\dag$]{Johann Bernoulli Institute for Mathematics and Computer Science, University of Groningen, Nijenborgh 9, 9747 AG Groningen, The Netherlands, e-mail:~A.J.van.der.Schaft@rug.nl}

\maketitle

\begin{abstract}
This paper addresses the issue of structure-preserving discretization of open distri-buted-parameter systems with Hamiltonian dynamics. Employing the formalism of discrete exterior calculus, we introduce a simplicial Dirac structure as a discrete analogue of the Stokes-Dirac structure and demonstrate that it provides a natural framework for deriving finite-dimensional port-Hamiltonian systems that emulate their infinite-dimensional counterparts. The spatial domain, in the continuous theory represented by a finite-dimensional smooth manifold with boundary, is replaced by a homological manifold-like simplicial complex and its augmented circumcentric dual. The smooth differential forms, in discrete setting, are mirrored by cochains on the primal and dual complexes, while the discrete exterior derivative is defined to be the coboundary operator. This approach of discrete differential geometry, rather than discretizing the partial differential equations, allows to first discretize the underlying Stokes-Dirac structure and then to impose the corresponding finite-dimensional port-Hamiltonian dynamics. In this manner, a number of important intrinsically topological and geometrical properties of the system are preserved.
\end{abstract}

\section{Introduction}
The purpose of this paper is to propose a sound geometric framework for structure-preserving discretization of distributed-parameter port-Hamiltonian systems. Our approach to time-continuous spatially-discrete port-Hamiltonian theory is based on discrete exterior geometry and as such proceeds \emph{ab initio} by mirroring the continuous setting. The theory is not merely tied to the goal of discretization but rather aims to offer a sound and consistent framework for defining port-Hamiltonian dynamics on a discrete manifold which is usually, but not necessarily, obtained by discretization of a smooth Riemannian manifold.

The underlying structure of open distributed-parameter dynamical systems considered in this paper is a Stokes-Dirac structure \cite{vdSM02} and as such is being defined on a certain space of differential forms on a smooth finite-dimensional orientable, usually Riemannian, manifold with a boundary. The Stokes-Dirac structure generalizes the framework of the Poisson and symplectic structures by providing a theoretical account that permits the inclusion of varying boundary variables in the boundary problem for partial differential equations. From an interconnection and control viewpoint, such a treatment of boundary conditions is essential for the incorporation of energy exchange through the boundary, since in many applications the interconnection with the environment takes place precisely through the boundary. The same arguments apply to the finite-dimensional approximations of complex distributed-parameter systems. For numerical integration, simulation and control synthesis, it is of paramount interest to have finite approximations that can be interconnected to one another or via the boundary coupled to other systems, be they finite- or infinite-dimensional.

Most of the numerical algorithms for spatial discretization of distributed-parameter systems, primarily finite difference and finite element methods, fail to capture the intrinsic system structures and properties, such as symplecticity, conservation of momenta and energy, as well as differential gauge symmetry. Furthermore, some important results, including the Stokes theorem, fail to apply numerically and thus lead to spurious results. This loss of fidelity to preserve some inherent topological and geometric structures of the continuous models motivates a more geometry based approach.

The discrete approach to geometry goes back to Whitney, who in \cite{Whitney} introduced an isomorphism between simplicial and de Rham cohomology. More recent antecedents can be found, for instance, in \cite{Sen}, and also in the computational electromagnetism literature \cite{Bosavit,Bossavit1,Gross}. For a comprehensive historical summary we refer to the thesis \cite{Hirani} and references therein. The literature, however, seems mostly focused on discretization of systems with infinite spatial domains, boundaryless manifolds, and systems with zero boundary conditions. In this paper, we augment the definition of the dual cell complex in order to allow nonzero energy flow through boundary. 

A notable previous attempt to resolve the problem of structure-preserving discretization of port-Hamiltonian systems is \cite{Golo}, where the authors employ the mixed finite element method. Their treatment is restricted to the one-dimensional telegraph equation and the two-dimensional wave equation. Although it is hinted that the same methodology applies in higher dimensions and to the other distributed-parameter systems, the results are not clear. It is worth noting that the choice of the basis functions can have dramatic consequences on the numerical performance of the mixed finite element method; as the mesh is being refined, it easily may lead to an ill-conditioned finite-dimensional linear system \cite{Arnold}. The other undertaking on discretization of port-Hamiltonian systems can be found in \cite{vdSM08,vdSM09}, but the treatment is purely topological and is more akin to the graph-theoretical formulation of conservation laws. Furthermore, the authors in \cite{vdSM08,vdSM09} do not introduce a discrete analogue of the Stokes theorem and the entire approach is tied to the goal of preserving passivity.

Our approach is that of discrete exterior calculus \cite{Desbrun1,Desbrun,Hirani,Stern}, which has previously been applied to variational problems naturally arising in mechanics and electromagnetism. These problems stem from a Lagrangian, rather than Hamiltonian, modeling perspective and as such they conform to a multisymplectic structure \cite{Gotay,Marsden,Marsden1,Joris}, rather than the Stokes-Dirac structure. A crucial ingredient for the numeric integration is the asynchronous variational integrator for spatio-temporally discretized problems, whereas our approach spatially discretizes the Stokes-Dirac structure and allows imposing time-continuous spatially discrete dynamics. This apparent discrepancy between multisymplectic and the Stokes-Dirac structure-preserving discretization could be elevated by, for instance, defining Stokes-Dirac structure on a pseudo-Riemannian manifold to insure a treatment of space and time on equal footing, whilst keeping nonzero exchange through the boundary.

\vspace{0.15cm}
\noindent{\textbf{Contribution and outline of the paper.}} We begin by recalling the definition of the Stokes-Dirac structure and port-Hamiltonian systems. In order to make this paper as self-contained as possible for a variety of readers, we present a brief overview of the elementary discrete exterior geometry needed to define a discretized Stokes-Dirac structure and impose appropriate port-Hamiltonian dynamics. The third section is a brief summary of the essential definitions and results in discrete exterior calculus as developed in \cite{Desbrun1,Desbrun, Hirani}. The contribution of this paper in this regard is a proper treatment of the boundary of the dual cell complex. Namely, in order to allow the inclusion of nonzero boundary conditions on the dual cell complex, we offer a definition of the dual boundary operator that differs from the standard one. Such a construction leads to a discrete analogue of the integration by parts formula, which is a crucial ingredient in establishing a discrete Stokes-Dirac structure on a primal simplicial complex and its circumcentric dual. The main result is presented in Section~\ref{Sec4}, where we introduce the notion of simplicial Dirac structures on a primal-dual cell complex, and in the following section define port-Hamiltonian systems with respect to these structures. In Section~\ref{Sec:matrixrep} we give a matrix representation for the simplicial Dirac structures and linear port-Hamiltonian systems, for which we also establish bounds for the energy of discretization errors. Finally, we demonstrate how the simplicial Dirac structures relate to some spatially discretized distributed-parameter systems with boundary variables: Maxwell's equations on a bounded domain, a two-dimensional wave equation, and the telegraph equations. While the focus of this paper is not implementation of discrete exterior calculus in discretization of port-Hamiltonian systems, we have, nonetheless, taken some preliminary numerical investigations. We demonstrate the application of the developed machinery using the example of the telegraph equations.

Some preliminary results of this paper have been reported in \cite{SeslijaCDC}.

\section{Dirac Structures and Port-Hamiltonian Dynamics}\label{Sec2}
Dirac structures were originally developed in \cite{CourantWeinstein,Courant,Dorfman} as a generalization of symplectic and Poisson structures. The formalism of Dirac structure was employed as the geometric notion underpinning generalized power-conserving interconnections and thus allowing the Hamiltonian formulation of interconnected and constrained dynamical systems.

A constant Dirac structure can be defined as follows. Let $\mathcal{F}$, $\mathcal{E}$, and $L$ be linear spaces. Given a $f\in\mathcal{F}$ and an $e\in\mathcal{E}$, the pairing will be denoted by $\langle e | f\rangle\in L$. By symmetrizing the pairing, we obtain a symmetric bilinear form $\langle\!\langle,\rangle\!\rangle: \mathcal{F}\times \mathcal{E} \rightarrow L$ defined by
\begin{equation*}
\langle\!\langle (f_1,e_1) ,(f_2,e_2)\rangle\!\rangle=\langle e_1 |f_2\rangle+\langle e_2 | f_1\rangle\,.
\end{equation*}

\begin{dfn}
A Dirac structure is a linear subspace $\mathcal{D}\subset \mathcal{F} \times \mathcal{E}$ such that $\mathcal{D}=\mathcal{D}^{\perp}$, with $\perp $ standing for the orthogonal complement with respect to the bilinear form $\langle\!\langle,\rangle\!\rangle$.
\end{dfn}

It immediately follows that for any $(f,e)\in \mathcal{D}$
\begin{equation*}
0=\langle\!\langle (f,e) ,(f,e)\rangle\!\rangle=2 \langle e |f \rangle\,.
\end{equation*}
Interpreting $(f,e)$ as a pair of power variables, the condition $(f,e)\in \mathcal{D}$ implies power-conservation $\langle e| f\rangle=0$, and as such is \emph{terminus a quo} for the geometric formulation of port-Hamiltonian systems.

Much is known about finite-dimensional Dirac structures and their role in physics; however, hitherto there is no complete theory of Dirac structures for field theories. An initial contribution in this direction is made in the paper \cite{vdSM02}, where the authors introduce a notion of the Stokes-Dirac structure. This infinite-dimensional Dirac structure lays down the foundation for port-Hamiltonian formulation of a class of distributed-parameter systems with boundary energy flow. In this section, we provide a very brief overview of the Stokes-Dirac structure \cite{vdSM02}.

Throughout this paper, let $M$ be an oriented $n$-dimensional smooth manifold with a smooth $(n-1)$-dimensional boundary $\partial M$ endowed with the induced orientation, representing the space of spatial variables. By $\Omega^k(M)$, $k=0,1,\ldots,n$, denote the space of exterior $k$-forms on $M$, and by $\Omega^k(\partial M)$, $k=0,1,\ldots,n-1$, the space of $k$-forms on $\partial M$. A natural non-degenerative pairing between $\alpha\in\Omega^k(M)$ and $\beta\in\Omega^{n-k}(M)$ is given by $\langle \beta|\alpha\rangle=\int_M \beta \wedge \alpha$. Likewise, the pairing on the boundary $\partial M$ between $\alpha\in \Omega^k(\partial M)$ and $\beta\in \Omega^{n-k-1}(\partial M)$ is given by $\langle \beta|\alpha\rangle=\int_{\partial M} \beta \wedge \alpha$.

For any pair $p,q$ of positive integers satisfying $p+q=n+1$, define the flow and effort linear spaces by
\begin{equation*}
\begin{split}
\mathcal{F}_{p,q}=\,&\Omega^p(M)\times \Omega^q(M)\times \Omega^{n-p}(\partial M)\,\\
\mathcal{E}_{p,q}=\,&\Omega^{n-p}(M)\times \Omega^{n-q}(M)\times \Omega^{n-q}(\partial M)\,.
\end{split}
\end{equation*}
The bilinear form on the product space $\mathcal{F}_{p,q}\times\mathcal{E}_{p,q}$ is given by
\begin{equation}\label{eq-aj9}
\begin{split}
\langle\!\langle (\underbrace{f_p^1,f_q^1,f_b^1}_{\in \mathcal{F}_{p,q}},&\underbrace{e_p^1,e_q^1,e_b^1}_{\in \mathcal{E}_{p,q}}), (f_p^2,f_q^2,f_b^2,e_p^2,e_q^2,e_b^2)\rangle\!\rangle=\\
& \int_M \left( e_p^1\wedge f_p^2+e_q^1\wedge f_q^2+ e_p^2\wedge f_p^1+e_q^2\wedge f_q^1 \right)+\int_{\partial M} \left(e_b^1\wedge f_b^2+ e_b^2\wedge f_b^1\right)
  \,.
  \end{split}
\end{equation}

\begin{thm}
Given linear spaces $\mathcal{F}_{p,q}$ and $\mathcal{E}_{p,q}$, and bilinear form $\langle\!\langle, \rangle\!\rangle$, define the following linear subspace $\mathcal{D}$ of $\mathcal{F}_{p,q}\times \mathcal{E}_{p,q}$
\begin{equation}\label{eq-7Cont}
\begin{split}
\mathcal{D}=\big\{& (f_p,f_q,f_b,e_p,e_q,e_b)\in \mathcal{F}_{p,q}\times \mathcal{E}_{p,q}\big |\\
& \left(\begin{array}{c}f_p \\f_q\end{array}\right)=\left(\begin{array}{cc}0 & (-1)^{pq+1} {\mathrm{d}}\\ {\mathrm{d}} & 0\end{array}\right)\left(\begin{array}{c}e_p \\e_q\end{array}\right)\,,\\
& \left(\begin{array}{c}f_b \\e_b\end{array}\right)=\left(\begin{array}{cc}1&0\\0 & -(-1)^{n-q}\end{array}\right)\left(\begin{array}{c}e_p|_{\partial M} \\e_q|_{\partial M}\end{array}\right)
\big \}\,,
 \end{split}
\end{equation}
where $\mathrm{d}$ is the exterior derivative and $|_{\partial M}$ stands for a trace on the boundary $\partial M$. 
Then $\mathcal{D}=\mathcal{D}^\perp$, that is, $\mathcal{D}$ is a Dirac structure.
\end{thm}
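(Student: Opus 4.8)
The plan is to prove the two inclusions $\mathcal{D}\subseteq\mathcal{D}^\perp$ and $\mathcal{D}^\perp\subseteq\mathcal{D}$ separately, the recurring tools being the graded Leibniz rule $\mathrm{d}(\alpha\wedge\beta)=\mathrm{d}\alpha\wedge\beta+(-1)^{\deg\alpha}\alpha\wedge\mathrm{d}\beta$ together with Stokes' theorem $\int_M\mathrm{d}\omega=\int_{\partial M}\omega|_{\partial M}$, and the nondegeneracy of the wedge pairings $\langle\beta|\alpha\rangle=\int_M\beta\wedge\alpha$ on $M$ and on $\partial M$. Since $p+q=n+1$, the degree count $\deg\mathrm{d}e_q=n-q+1=p$ and $\deg\mathrm{d}e_p=n-p+1=q$ confirms that the defining relations are well typed, and the sign factors $(-1)^{pq+1}$ and $-(-1)^{n-q}$ are precisely the ones that will make the interior and boundary contributions cancel.

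For the inclusion $\mathcal{D}\subseteq\mathcal{D}^\perp$, I would take two elements of $\mathcal{D}$, substitute $f_p^i=(-1)^{pq+1}\mathrm{d}e_q^i$, $f_q^i=\mathrm{d}e_p^i$, $f_b^i=e_p^i|_{\partial M}$ and $e_b^i=-(-1)^{n-q}e_q^i|_{\partial M}$ into the bilinear form \eqref{eq-aj9}, and rewrite each interior term of the type $e_p^1\wedge\mathrm{d}e_q^2$ by the Leibniz rule. Applying Stokes' theorem converts the exact pieces $\mathrm{d}(e_p^i\wedge e_q^j)$ into integrals over $\partial M$, while the surviving interior pieces cancel in pairs once the sign $(-1)^{pq+1}$ is accounted for; a short computation shows the relevant exponent reduces modulo $2$ to an even number, using $n=p+q-1$. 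The produced boundary integrals are then, after reordering the wedge factors on $\partial M$, exactly the negatives of the explicit boundary terms $\int_{\partial M}(e_b^1\wedge f_b^2+e_b^2\wedge f_b^1)$, so the whole expression vanishes; this is where the factor $-(-1)^{n-q}$ in $e_b$ earns its keep.

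For the reverse inclusion $\mathcal{D}^\perp\subseteq\mathcal{D}$, let $(f_p,f_q,f_b,e_p,e_q,e_b)$ be orthogonal to all of $\mathcal{D}$. Writing an arbitrary element of $\mathcal{D}$ in terms of its free data $e_p',e_q'$ and performing the same integration by parts as above, the orthogonality condition becomes a single identity that must hold for all admissible $e_p',e_q'$. First I would restrict to $e_p',e_q'$ compactly supported in the interior of $M$, so that all boundary contributions drop out; nondegeneracy of the wedge pairing on $M$ (the fundamental lemma of the calculus of variations) then forces the interior relations $f_q=\mathrm{d}e_p$ and $f_p=(-1)^{pq+1}\mathrm{d}e_q$. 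Substituting these back annihilates the interior integral and leaves an identity purely on $\partial M$; varying the traces $e_p'|_{\partial M}$ and $e_q'|_{\partial M}$ independently, which is legitimate because the trace map onto boundary forms is surjective, and invoking nondegeneracy of the pairing on $\partial M$ yields the two boundary relations $f_b=e_p|_{\partial M}$ and $e_b=-(-1)^{n-q}e_q|_{\partial M}$. Hence the element lies in $\mathcal{D}$.

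I expect the second inclusion to be the main obstacle: unlike the finite-dimensional situation it cannot be closed by a dimension count, so it rests on the infinite-dimensional nondegeneracy and density arguments and on the surjectivity of the trace. The other delicate point, common to both halves, is the disciplined bookkeeping of the Koszul signs generated by the Leibniz rule and by the reorderings of wedge factors on $M$ and on $\partial M$; the consistency of these signs with the constraint $p+q=n+1$ is exactly what makes the two halves of the argument fit together.
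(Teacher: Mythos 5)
Your proposal is correct, but a point of orientation first: the paper at hand never proves this statement --- Theorem 2.1 is recalled from \cite{vdSM02}, and your argument is essentially the classical proof given there (substitution plus the Leibniz rule and Stokes' theorem for $\mathcal{D}\subseteq\mathcal{D}^\perp$; compactly supported variations, the fundamental lemma, and surjectivity of the trace for $\mathcal{D}^\perp\subseteq\mathcal{D}$). Your sign bookkeeping checks out: with $p+q=n+1$ the interior terms combine into exact forms with coefficient $(-1)^{p-1}$, and Stokes' theorem produces boundary integrals that are cancelled exactly by the factor $-(-1)^{n-q}=-(-1)^{p-1}$ in the definition of $e_b$. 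The instructive comparison is with the proof the paper \emph{does} give, namely for the discrete analogue (the simplicial Dirac structure of Section~\ref{Sec4}): there the first inclusion is handled exactly as you handle yours, with the discrete summation-by-parts formula playing the role of integration by parts, but the converse inclusion is dispatched in one line by a dimension count, $\dim\mathcal{F}_{p,q}^d=\dim\mathcal{E}_{p,q}^d=\dim\mathcal{D}_d$ together with nondegeneracy of the pairing. You correctly anticipated that this shortcut is unavailable in the continuous setting and replaced it with the variational argument --- interior relations first from compactly supported test efforts, then the boundary relations from independent variation of the traces. What the dimension count buys in the discrete case is brevity and the absence of any analytic input; what your (and \cite{vdSM02}'s) route buys is validity in infinite dimensions, at the price of the interior/boundary splitting of the variations and the density and trace-surjectivity facts you invoke.
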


\begin{rmk}
Although the differential operator in (\ref{eq-7Cont}), in the presence of nonzero boundary conditions, is not skew-symmetric, it is possible to associate a pseudo-Poisson structure to the Stokes-Dirac structure \cite{vdSM02}. In the absence of algebraic constraints, the Stokes-Dirac structure specializes to a Poisson structure \cite{Dorfman}, and as such it can be derived through symmetry reduction from a canonical Dirac structure on the phase space \cite{JorisCDC}. Whether this reduction can be done for the Stokes-Dirac structure on a manifold with boundary remains an important open problem. 
\end{rmk}

In order to define Hamiltonian dynamics, consider a Hamiltonian density $\mathcal{H}:\Omega^p(M) \times \Omega^q(M)\rightarrow \Omega^n(M)$ resulting with the Hamiltonian $H=\int_M \mathcal{H}\in \mathbb{R}$. Now, consider a time function $t\mapsto (\alpha_p(t),\alpha_q(t))\in \Omega^p(M) \times \Omega^q(M)$, $t\in \mathbb{R}$, and the Hamiltonian $t\mapsto H(\alpha_p(t),\alpha_q(t))$ evaluated along this trajectory, then at any $t$
$$\frac{\textmd{d} H}{\textmd{d} t}=\int_M \delta_p H \wedge \frac{\partial \alpha_p}{\partial t}+\delta_q H\wedge \frac{\partial \alpha_q}{\partial t}\,,$$
where $(\delta_pH,\delta_qH)\in \Omega^{n-p}(M)\times \Omega^{n-q}(M)$ are the (partial) variational derivatives of $H$ at $(\alpha_p,\alpha_q)$.

Setting the flows $f_p=-\frac{\partial \alpha_p}{\partial t}$, $f_q=-\frac{\partial \alpha_q}{\partial t}$ and the efforts $e_p=\delta_p H$, $e_q=\delta_q H$, the distributed-parameter port-Hamiltonian system is defined by the relation
$$\left(  -\frac{\partial \alpha_p}{\partial t},  -\frac{\partial \alpha_q}{\partial t},f_b, \delta_p H,\delta_qH, e_b\right)\in \mathcal{D}\,,~~t\in \mathbb{R}\,.$$
For such a system, it straightaway follows that $\frac{\textmd{d} H}{\textmd{d} t}=\int_{\partial M}e_b\wedge f_b$, expressing the fact that the system is lossless. In other words, the increase in the energy of the system is equal to the power supplied to the system through the boundary $\partial M$.

\section{Fundaments of Discrete Exterior Calculus}\label{Sec3} 
The discrete manifolds we employ are oriented manifold-like simplicial complexes and their circumcentric duals. Typically, these manifolds are simplicial approximations of smooth manifolds. Familiar examples are meshes of triangles embedded in $\mathbb{R}^3$ and tetrahedra obtained by tetrahedrization of a $3$-dimensional manifold. There are many ways to obtain such complexes; however, we do not address the issue of discretization and embedding.

As said, we proceed \emph{ab initio} and mostly our treatment is purely formal, that is without proofs that the discrete objects converge to the continuous ones, though we briefly address the issue of convergence in Section~\ref{Sec:convergence}. By construction of discrete exterior calculus, a number of important geometric structures are preserved and propositions like the Stokes theorem are true by definition. The basic building blocks of discrete exterior geometry are discrete chains and cochains, and their geometric duals. The former are simplices and the latter are discrete differential forms related one to another by bilinear pairing that can be understood as the evaluation of a cochain on an appropriate simplex, and as such parallels the integration in the continuous setting.

In discrete exterior calculus, a dual mesh is instrumental for defining the diagonal Hodge star. In the paper at hand, the geometric duality is a crucial ingredient in establishing a bijective relationship between the flow and effort spaces, as well as for construction of nondegenerate discrete analogues of the bilinear form (\ref{eq-aj9}).

This section, with some modification concerning the treatment of the boundary of the dual cell complex, is a brief summary of the essential definitions and results in discrete exterior calculus as developed in \cite{Desbrun1,Desbrun, Hirani}. As therein, first we define discrete differential forms, the discrete exterior derivative, the codifferential operator, the Hodge star, and the discrete wedge product. For more information on the construction of the other discrete objects such as vector fields, a discrete Lie derivative, and discrete musical operators, we refer the reader to \cite{Hirani}. Construction of all these discrete objects is in a way simpler than their continuous counterparts since we require only a local metric, ergo the machinery of the Riemannian geometry is not demanded. With the exception of the treatment of the notions related the boundary of the dual cell complex, a good part of this section is a recollection of the basic concepts and results of algebraic topology \cite{Hatcher, Munkres}.

\subsection{Simplicial complexes and their circumcentric duals}     

\begin{dfn}
A $k$-simplex is the convex span of $k+1$ geometrically independent points,
\begin{equation*}
\sigma^k =\left[v_0,v_1,\hdots, v_k \right]=\left\{ \sum_{i=0}^k \alpha^i v_i  \,\big |\, \alpha^i \geq  0, \sum_{i=0}^n \alpha^i =1 \right\}.
\end{equation*}
The points $v_0,\ldots,v_k$ are called the vertices of the simplex, and the number $k$ is called the dimension of the simplex. Any simplex spanned by a (proper) subset of $\{v_0,\ldots,v_k\}$ is called a (proper) face of $\sigma^k$. If $\sigma^l $ is a proper face of $\sigma^k$, we denote this by $\sigma^l \prec \sigma^k$.

\end{dfn}

As an illustration, consider four non-collinear points $v_0$, $v_1$, $v_2$, and $v_3$ in $\mathbb{R}^3$. Each of these points individually is $0$-simplex with an orientation dictated by the choice of a sign. An example of a $1$-simplex is a line segment $[v_0,v_1]$ oriented from $v_0$ to $v_1$. The triangle $[v_0,v_1,v_2]$ is an example of $2$-simplex oriented in counterclockwise direction. Similarly, the tetrahedron $[v_0,v_1,v_2,v_3]$ is a $3$-simplex.

\begin{dfn}
A simplicial complex $K$ in $\mathbb{R}^N$ is a collection of simplices in $\mathbb{R}^N$, such that:
\begin{itemize}
  \item[$(1)$] Every face of a simplex of $K$ is in $K$. 
  \item[$(2)$] The intersection of any two simplices of $K$ is a face of each of them.
\end{itemize}
The dimension $n$ of the highest dimension simplex in $K$ is the dimension of $K$. 
\end{dfn}

The above given definition of a simplicial complex is more general than needed for the purposes of exterior calculus. Since the discrete theory employed in this paper mirrors the continuous framework, we restrict our considerations to \emph{manifold-like simplicial complexes} \cite{Hirani}.

\begin{dfn}
A simplicial complex $K$ of dimension $n$ is a manifold-like simplicial complex if the underlying space is a polytope $|K|$. In such a complex all simplices of dimension $k=0,\ldots,n-1$ must be a face of some simplex of dimension $n$ in the complex.
\end{dfn}

Introducing these simplicial meshes has an added advantage of allowing a simple and intuitive definition of orientability of simplicial complexes \cite{Hirani}.

\begin{dfn}
An $n$-dimensional simplicial complex $K$ is an oriented manifold-like simplicial complex if the $n$-simplices that share a common $(n-1)$-face have the same orientation and all the simplices of lower dimensions are individually oriented.
\end{dfn}

Henceforth in this paper, we shall work with manifold-like simplicial complexes. When no confusion can arise, we address these objects simply as simplicial complexes.

An essential constituent of discrete exterior calculus is the dual complex of a manifold-like simplicial complex. The most popular notions of duality are barycentric and circumcentric, also known as Voronoi, duality. Following the standard approach of discrete exterior calculus, in this paper we employ the latter.

The circumcenter of a $k$-simplex $\sigma^k$ is given by the center of the $k$-circumsphere, where the $k$-circumsphere is the unique $k$-sphere that has all $k+1$ vertices of $\sigma^k$ on its surface. That is, the circumcenter is the unique point in the $k$-dimensional affine space that contains the $k$-simplex that is equidistant from all the $k+1$ nodes of the simplex. We denote the circumcenter of a simplex $\sigma^k$ by $c(\sigma^k)$.

If the circumcenter of a simplex lies in its interior we call it a well-centered simplex. For instance, a triangle with all acute angles is a well-centered $2$-simplex. A simplicial complex $K$ whose all simplices of all dimensions are well-centered is called a well-centered simplicial complex and its dual obtained by circumcentric subdivision is also a simplicial complex denoted by $\textmd{csd}\,K$ and its elements by $\hat{\sigma}^0,\ldots,\hat{\sigma}^n$. Throughout this paper, we adopt the convention that all symbols related to the dual (simplicial and cell) complex are labeled by a caret. The underlying spaces $|K|$ and $|\textmd{csd}\,K|$ are the same. The simplicial complex $\textmd{csd} \,K$ consists of all simplices of the form $[c(\sigma_1),\ldots,c(\sigma_k)]$ for $k=1,\ldots,n$, where $\sigma_1\prec\sigma_2\prec\ldots\prec \sigma_k$, meaning $\sigma_i$ is a proper face of $\sigma_j$ for all $i<j$.

A circumcentric dual cell complex (block complex in terminology of \cite{Munkres}) is obtained by aggregation of certain simplices of $\textmd{csd} \,K$. Let $K$ be a well-centered simplicial complex of dimension $n$ and let $\sigma^k$ be one of its simplices. By $D(\sigma^p)$ we denote the union of all open simplices of $\textmd{csd} \,K$ of  which $c(\sigma^k)$ is the final vertex; this cell is the dual cell to $\sigma^k$. The closure of the dual cell of $\sigma^k$ is $\bar{D}(\sigma^k)$. The collection of all dual cells is a cell complex denoted by $D(K)$ with closure $\bar{D}(K)$.

\begin{figure}
\centering
    \includegraphics[width=13cm]{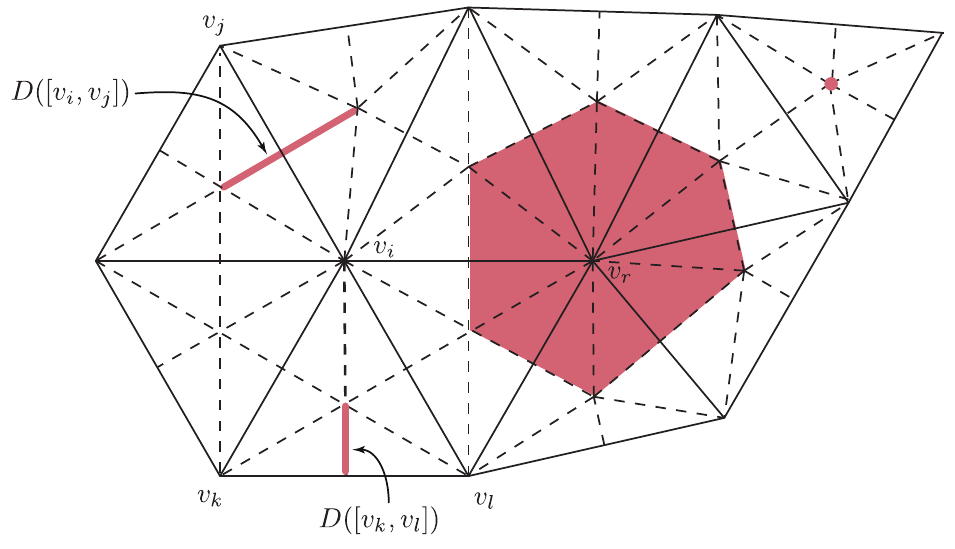}\\
  \caption{A $2$-dimensional simplicial complex $K$ subdivided into the circumcentric simplicial complex $\mathrm{csd}\,K$ indicated by dotted lines. The dual cells displayed are shaded.}\label{fig1}
\end{figure}

To illustrate the duality, consider the $2$-dimensional simplicial complex pictured in Figure~\ref{fig1}. The dual cell of the vertex $v_r$ is the topological interior of the Voronoi region around it as shown shaded in the figure. This dual cell is comprised of the vertex $v_r$, the interior of the open edges emanating from $v_r$, and interiors of the all dual simplices containing $v_r$. The dual cell of any $2$-simplex consists of its circumcenter alone. The dual cell of an edge consists of its circumcenter and two open edges joining this circumcenter to the circumcenters of two triangles having the primal edge as a face. The dual of a boundary edge has only one half-edge since there is only one triangle adjacent to that boundary edge. Note that if the complex is not flat, then the dual edge will not be a straight line.

\begin{rmk}
A triangulation of a compact $n$-dimensional Riemannian manifold $M$ results in an $n$-dimensional simplicial complex $K$. Intuitively, the simplices are glued to the manifold $M$ in such a way that they form a `curved' manifold-like simplicial complex. It is worth noticing that in practical applications, the smooth manifold sometimes is unknown and can only be sampled by physical measurements. In such situations, it makes sense to model the spatial domain as inherently discrete. This is where discrete port-Hamiltonian theory in the framework of discrete exterior calculus stands in its own right. 
\end{rmk}

\subsection{Chains and cochains}
The discrete analogue of a smooth $k$-form is a $k$-cochain, a certain type of a function, on a $k$-chain representing a formal sum of simplices. The role of integration in the discrete theory is replaced by (simple) evaluation of a discrete form on a chain. The discrete exterior derivative is defined by duality to the boundary operator, rendering the Stokes theorem true by definition. Parallel to the smooth case, the discrete exterior wedge product pairs lower degree forms into a higher degree one.

\begin{dfn}
Let $K$ be a simplicial complex. We denote the free Abelian group generated by a basis consisting of oriented $k$-simplices by $C_k(K;\mathbb{Z})$. This is the space of finite formal sums of the $k$-simplices with coefficients in $\mathbb{Z}$. Elements of $C_k(K;\mathbb{Z})$ are called $k$-chains.
\end{dfn}

\begin{dfn}
A primal discrete $k$-form $\alpha$ is a homomorphism from the chain group $C_k(K;\mathbb{Z})$ to the additive group $\mathbb{R}$. Thus, a discrete $k$-form is an element of $\mathrm{Hom}(C_k(K),\mathbb{R})$, the space of cochains. This space becomes an Abelian group if we add two homomorphisms by adding their values in $\mathbb{R}$. The standard notation for $\mathrm{Hom}(C_k(K),\mathbb{R})$ in algebraic topology is $C^k(K;\mathbb{R})$; however, like in \cite{Desbrun1,Desbrun, Hirani} we shall also employ the notation $\Omega_d^k(K)$ for this space as a reminder that this is the space of discrete $k$-forms on the simplicial complex $K$. Thus,
\begin{equation*}
\Omega_d^k(K):=C^k(K;\mathbb{R})=\mathrm{Hom}(C_k(K),\mathbb{R})\,.
\end{equation*}

\end{dfn}

Given a $k$-chain $\sum_i a_i c_i^k$, $a_i \in \mathbb{Z}$, and a discrete $k$-form $\alpha$, we have
\begin{equation*}
\alpha\left( \sum_i a_i c_i^k \right)=\sum_i a_i \alpha(c_i^k)\,,
\end{equation*}
and for two discrete $k$-forms $\alpha,\beta\in \Omega_d^k(K)$ and a $k$-chain $c\in C_k(K;\mathbb{Z})$,
\begin{equation*}
(\alpha+\beta)(c)=\alpha(c)+\beta(c)\,.
\end{equation*}

The natural pairing of a $k$-form $\alpha$ and a $k$-chain $c$ is defined as the bilinear pairing
$\langle \alpha,c \rangle =\alpha(c)$.

As previously pointed out, a differential $k$-form $\alpha^k$ can be thought of as a linear functional that assigns a real number to each oriented cell $\sigma^k\in K$. In order to understand the process of discretization of the continuous problem consider a smooth $k$-form $f\in \Omega^k(|K|)$. The discrete counterpart of $f$ on a $k$-simplex $\sigma^k\in K$ is a discrete form $\alpha^k$ defined as $\alpha(\sigma^k):=\int_{\sigma^k}f$.

\begin{dfn}
The boundary operator $\partial_k:C_k(K;\mathbb{Z})\rightarrow C_{k-1}(K;\mathbb{Z})$ is a homomorphism defined by its action on a simplex $\sigma^k=[v_0,\ldots,v_k]$,
\begin{equation*}
\partial_k \sigma^k=\partial_k ([v_0,\ldots,v_k])=\sum_{i=0}^k(-1)^i [v_0,\ldots,\hat{v}_i,\ldots,v_k]\,,
\end{equation*}
where $[v_0,\ldots,\hat{v}_i,\ldots,v_k]$ is the $(k-1)$-simplex obtained by omitting the vertex $v_i$. Note that $\partial_k\circ \partial_{k+1}=0$.
\end{dfn}

\begin{dfn}
On a simplicial complex of dimension $n$, a chain complex is a collection of chain groups and homomorphisms $\partial_k$, such that
\begin{equation*}
\begin{split}
0{\longrightarrow}C_n(K) \xrightarrow{\partial_{n}}
C_{n-1}\xrightarrow{\partial_{n-1}}\cdots
\xrightarrow{\partial_{k+1}} C_k(K)\xrightarrow{\partial_{k}}\cdots
\xrightarrow{\partial_{1}}C_0(K)\xrightarrow{\partial_{0}}0\,,
\end{split}
\end{equation*}
and $\partial_k\circ \partial_{k+1}=0$.
\end{dfn}

\begin{dfn}
The discrete exterior derivative $\mathbf{d}:\Omega_d^k(K)\rightarrow \Omega_d^{k+1}(K)$ is defined by duality to the boundary operator $\partial_{k+1}:C_{k+1}(K;\mathbb{Z})\rightarrow C_{k}(K;\mathbb{Z})$, with respect to the natural pairing between discrete forms and chains. For a discrete form $\alpha^k\in \Omega_d^k(K) $ and a chain $c_{k+1}\in C_{k+1}(K;\mathbb{Z})$ we define $\mathbf{d}$ by%\vspace{-0.2cm}
\begin{equation*}
\langle \mathbf{d} \alpha^k,c_{k+1}\rangle=\langle \alpha^k,\partial_{k+1}c_{k+1}\rangle\,.
\end{equation*}
The discrete exterior derivative is the coboundary operator from algebraic topology \cite{Munkres} and as such it induces the cochain complex
\begin{equation*}
\begin{split}
0\,{\longleftarrow}\Omega_d^n(K) \xleftarrow{~\mathbf{ d}~}
\Omega_d^{n-1}\xleftarrow{~\mathbf{ d}~}\cdots
\xleftarrow{~\mathbf{ d}~}\Omega_d^0(K)\stackrel{\,}{\longleftarrow}0\,,
\end{split}
\end{equation*}
where $\mathbf{ d}\circ \mathbf{ d}=0$.

\end{dfn}

Such as in the continuous theory, we drop the index of the boundary operator when its dimension is clear from the context. The discrete exterior derivative $\mathbf{d}$ is constructed in such a manner that the Stokes theorem is satisfied by definition. This means, given a $(k+1)$-chain $c$ and a discrete $k$-form $\alpha$, the {\emph{discrete Stokes theorem}} states that
\begin{equation*}
\langle \mathbf{d} \alpha,c\rangle=\langle \alpha,\partial c\rangle\,.
\end{equation*}

Consider a $k$-chain $\sum_i {a_i}c_i$, $a_i\in \mathbb{Z}$, $c_i\in C_k(K;\mathbb{Z})$, and $(k-1)$-form $\alpha\in \Omega_d^{k-1}(K;\mathbb{Z})$. By linearity of the chain-cochain pairing, the discrete Stokes theorem can be stated as
\begin{equation*}
\left \langle \mathbf{d} \alpha,\sum_i a_i c_i\right \rangle=\left \langle \alpha,\partial\left(\sum_i a_i c_i\right)\right \rangle=\left \langle \alpha,\sum_i a_i \partial c_i\right \rangle=\sum_i a_i  \left \langle \alpha, \partial c_i\right \rangle\,.
\end{equation*}

As in the continuous setting, the discrete wedge product pairs two discrete differential forms by building a higher degree form. The primal-primal wedge product inherits some important properties of the cup product such as the bilinearity, anticommutativity and naturality under pull-back \cite{Desbrun1,Hirani}; however, it is in general non-associative and degenerate, and thus unsuitable for construction of canonical pairing between the flow and effort space. For a definition of nondegenerate pairing between the flow and effort discrete forms we shall use a primal-dual wedge product as will be defined in the subsequent section.

\subsection{Metric-dependent part of discrete exterior calculus}
A cellular chain group associated with the dual cell complex $D(K)$, in \cite{Munkres} denoted by $D_p(K)$, is the group of formal sums of cells with integer coefficients. Since in $D(K)$ the information of dual simplices is lost, to retain the bookkeeping information Hirani in \cite{Hirani} introduces a duality operator which takes values in the domain group $C_p(\textmd{csd} \,K; \mathbb{Z})$. As will be clear from the subsequent section, this bookkeeping is not indispensable for the formulation of the Dirac structure on a simplicial complex; nevertheless, since the information of dual simplices might be needed in defining dynamics, we also employ this construction.

In order to explicitly construct the duality on the boundary, in the next definition we introduce the boundary star operator. Shortly afterward we shall explain the rational behind this construction. 
\begin{dfn}
Let $K$ be a well-centered simplicial complex of dimension $n$. The interior circumcentric duality operator $\star_\mathrm{i}  :C_k(K;\mathbb{Z})\rightarrow C_{n-k}({\mathrm{csd}} \,K;\mathbb{Z})$
\begin{equation*}
\star_\mathrm{i} (\sigma^k)=\sum_{\sigma^k\prec\sigma^{k+1}\prec \cdots\prec \sigma^n} s_{\sigma^k,\ldots,\sigma^n}\left[ c(\sigma^k), c(\sigma^{k+1}),\ldots, c(\sigma^n) \right]\,,
\end{equation*}
and the boundary star operator $\star_\mathrm{b} :C_k(\partial K;\mathbb{Z})\rightarrow C_{n-1-k}(\partial (\mathrm{csd} \,K);\mathbb{Z})$
\begin{equation*}
\star_\mathrm{b} (\sigma^k)=\sum_{\sigma^k\prec\sigma^{k+1}\prec \cdots\prec \sigma^{n-1}} s_{\sigma^k,\ldots,\sigma^{n-1}}\left[ c(\sigma^k), c(\sigma^{k+1}),\ldots, c(\sigma^{n-1}) \right]\,,
\end{equation*}
where the $s_{\sigma^k,\ldots,\sigma^n}$ and $s_{\sigma^k,\ldots,\sigma^{n-1}}$ coefficients ensure that the orientation of the cell $[ c(\sigma^k),\\c(\sigma^{k+1}),\ldots, c(\sigma^n) ]$ and $[ c(\sigma^k), c(\sigma^{k+1}),\ldots, c(\sigma^{{n-1}}) ]$  is consistent with the orientation of the primal simplex, and the ambient volume forms on $K$ and $\partial K$, respectively.
\end{dfn}

The subset of chains $C_p(\mathrm{csd}\,K;\mathbb{Z})$ that are equal to the cells of $D(K)\times D(\partial K)$ forms a subgroup of $C_p(\mathrm{csd}\,K;\mathbb{Z})$. We denote this subgroup of $C_p(\mathrm{csd}\,K;\mathbb{Z})$ by $C_p(\star K;\mathbb{Z})$, where $\star K$ is its basis set. A cell complex $\star K$ in $\mathbb{R}^N$ is a collection of cells in $\mathbb{R}^N$ such that: $(1)$ there is a partial ordering of cells in $\star K$, $\hat{\sigma}^{k}\prec \hat{\sigma}^l$, which is read as $\hat{\sigma}^k$ is a face of $\hat{\sigma}^l$; $(2)$ the intersection of any two cells in $\star K$, is either a face of each of them, or it is empty; $(3)$ the boundary of a cell is expressed as a sum of its proper faces. 

Given a simplicial well-centered complex $K$, we define its interior dual cell complex $\star_\mathrm{i} K$ (block complex in terminology of algebraic topology \cite{Munkres}) as a circumcentric dual restricted to $|K|$. An important property of the the Voronoi duality is that primal and dual cells are orthogonal to each other. The boundary dual cell complex $\star_\mathrm{b} K$ is a dual to $\partial K$. The dual cell complex $\star K$ is defined as $\star K=\star_\mathrm{i} K \times \star_\mathrm{b} K$. A dual mesh $\star_\mathrm{i} K$ is a dual to $K$ in sense of a graph dual, and the dual of the boundary is equal to the boundary of the dual, that is $\partial (\star K)=\star (\partial K)=\star_\mathrm{b} K$. This construction of the dual is compatible with \cite{Stern,Hiptmair1} and as such is very similar to the use of the ghost cells in finite volume methods in order to account for the duality relation between the Dirichlet and the Neumann boundary conditions. Because of duality, there is a one-to-one correspondence between $k$-simplices of $K$ and interior $(n-k)$-cells of $\star K$. Likewise, to every $k$-simplex on $\partial K$ there is a uniquely associated $(n-1-k)$-cell on $\partial(\star K)$.

\begin{figure}
\centering
    \includegraphics[width=13cm]{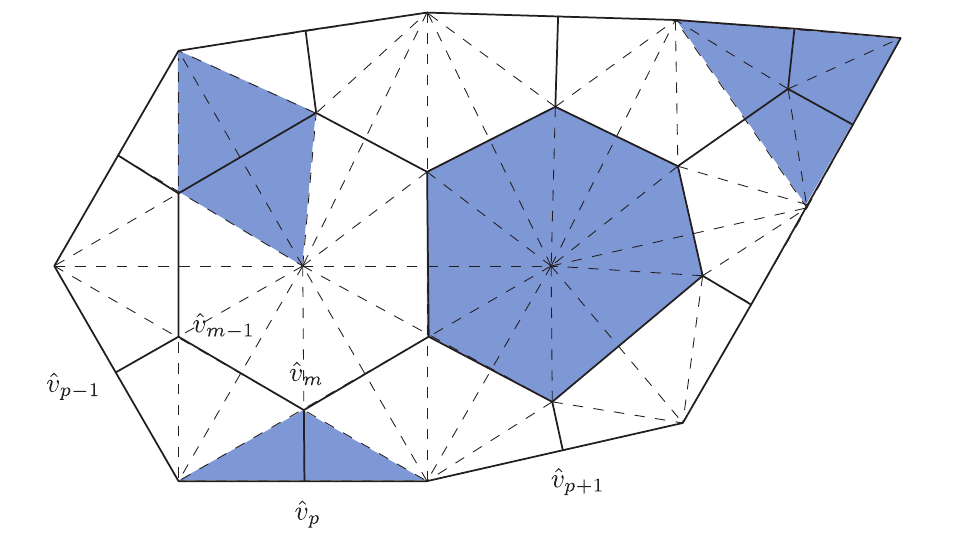}\\
  \caption{The circumcentric dual cell complex $\star K$ of the simplicial complex $K$ given in Figure~\ref{fig1}. The boundary of $\star K$ is the dual of the boundary of $K$. Some support volumes are shaded. For instance, the support volume of the primal vertex $v_r$ is the area of its Voronoi region; also $V_{[v_i,v_j]}=V_{\bar{D}([v_i,v_j])}$.}\label{fig2}
\end{figure}

In what follows, we shall abuse notation and use the same symbol $\star$ for both the interior circumcentric and the boundary star operator. The difference, if not clear from the exposition, will be delineated by indicating that $\star \sigma^k\in \partial (\star K)$ when $\star \sigma^k$ is a dual cell on the boundary of the dual cell complex $\star K$. As sets, the set of $\bar{D}(\sigma^p)$ and $\star \sigma^p$ are equal, the only difference being in the  bookkeeping, since in $\star \sigma^p$ one retains the information about the simplices it is built of. Here we do not address the problem of the orientation of dual $\star K$, for which we direct the reader to \cite{Hirani}. The circumcentric dual cell complex of the $2$-dimensional simplicial complex from Figure~\ref{fig1} is pictured in Figure~\ref{fig2}.

An important concept in defining a wedge product between primal and dual cochains is the notion of a support volume associated with a given simplex or cell. 
\begin{dfn}
The support volume of a simplex $\sigma^k$ is an $n$-volume given by the convex hull of the geometric union of the simplex and its circumcentric dual. This is given by
\begin{equation*}
V_{\sigma^k}=\mathrm{CH}(\sigma^k,\star_\mathrm{i} \sigma^k)\cap |K|\,,
\end{equation*}
where $\mathrm{CH}(\sigma^k,\star_\mathrm{i} \sigma^k)$ is the $n$-dimensional convex hull generated by $\sigma^k\cup \star_\mathrm{i} \sigma^k$. The intersection with $|K|$ is necessary to ensure that the support volume does not extend beyond the polytope $|K|$ which would otherwise occur if $K$ is nonconvex.
The support volume of a dual cell $\star_\mathrm{i}\sigma^k$ is 
\begin{equation*}
V_{\star_\mathrm{i} \sigma^k}=\mathrm{CH}(\star_\mathrm{i}\sigma^k,\star_\mathrm{i}\star_\mathrm{i} \sigma^k)\cap |K|=V_{\sigma^k}\,.
\end{equation*}
\end{dfn}

Everything that has been said about the primal chains and cochains can be extended to dual cells and dual cochains. We do not elaborate on this since it can be found in the literature \cite{Hirani, Desbrun}, however, in order to properly account for the behaviours on the boundary, we need to adapt the definition of the boundary dual operator as presented in \cite{Hirani, Desbrun}. We propose the following definition.

\begin{dfn}
The dual boundary operator $\partial_k: C_k(\star_\mathrm{i} K;\mathbb{Z})\rightarrow C_{k-1}(\star K;\mathbb{Z})$ is a homomorphism defined by its action on a dual cell $\hat{\sigma}^k=\star_\mathrm{i} \sigma^{n-k}=\star_\mathrm{i}[v_0,\ldots,v_{n-k}]$,
\begin{equation*}
\begin{split}
\partial \hat{\sigma}^k=\partial \star_\mathrm{i} [v_0,\ldots,v_{n-k}]&=\partial_\mathrm{i} \star_\mathrm{i} [v_0,\ldots,v_{n-k}]+\partial_\mathrm{b} \star_\mathrm{i} [v_0,\ldots,v_{n-k}]\,,
\end{split}
\end{equation*}
where
\begin{equation*}
\begin{split}
\partial_\mathrm{i} \star_\mathrm{i} [v_0,\ldots,v_{n-k}]&=\sum_{\sigma^{n-k+1}\succ \sigma^{n-k}}\star_\mathrm{i} (s_{\sigma^{n-k+1}}\sigma^{n-k+1})\\
\partial_\mathrm{b} \star_\mathrm{i} [v_0,\ldots,v_{n-k}]&=\star_\mathrm{b} \left(s_{\sigma^{n-k}}\sigma^{n-k} \right)\,.
\end{split}
\end{equation*}

\end{dfn}
Note that the dual boundary operator as defined in \cite{Hirani} is equal to $\partial_\mathrm{i}$. Hence, the dual boundary is not the geometric boundary of a cell, because near the boundary of a manifold that would be wrong. As an example consider the complex in Figure~\ref{fig3}. The dual of the vertex $v_1$ is the Voronoi region shown shaded. Its geometric boundary has five sides (two half primal edges and three dual edges), whereas the dual boundary according to the definition given in \cite{Hirani} consists of just dual edges, i.e. $[c([v_0,v_1]),c([v_0,v_1,v_2])]$, $[c([v_0,v_1,v_2]),c([v_1,v_3,v_2])]$ and $[c([v_1,v_3,v_2]),c([v_1,v_3])]$, all up to a sign depending on the chosen orientation. However, according to the above given definition, the boundary is comprised of four edges, three already given plus the boundary edge $[c([v_0,v_1]),c([v_1,v_3])]$ obtained by aggregation of the two dual simplices $[c([v_0,v_1]),v_1]$ and $[v_1,c([v_1,v_3])]$ . This construction of the dual boundary ensures a natural pairing between a primal $0$-form defined on $v_1$ and a dual $1$-form on $[c([v_0,v_1]),c([v_1,v_3])]$. The offered definition of the dual boundary operator, as will be demonstrated later, is crucial for the inclusion of the boundary variables in the discrete setting.

\begin{dfn}
The dual discrete exterior derivative $\mathbf{d}:\Omega_d^k(\star K)\rightarrow \Omega_d^{k+1}(\star_\mathrm{i} K)$ is defined by duality to the boundary operator $\partial :C_{k+1}(\star_\mathrm{i} K;\mathbb{Z})\rightarrow C_{k}(K;\mathbb{Z})$. For a dual discrete form $\hat{\alpha}^k\in \Omega_d^k(\star K) $ and a chain $\hat{c}_{k+1}\in C_{k+1}(\star_\mathrm{i} K;\mathbb{Z})$ we define $\mathbf{d}$ by%\vspace{-0.2cm}
\begin{equation*}
\langle \mathbf{d} \hat{\alpha}^k,\hat{c}_{k+1}\rangle=\langle \hat{\alpha}^k,\partial \hat{c}_{k+1}\rangle\,.
\end{equation*}
The dual discrete exterior derivative $\mathbf{d}$ can be decomposed into the two operators $\mathbf{d}_\mathrm{i}$ and $\mathbf{d}_\mathrm{b}$, which are respectively dual to $\partial_\mathrm{i}$ and $\partial_\mathrm{i}$, that is
\begin{equation*}
\langle \mathbf{d} \hat{\alpha}^k,\hat{c}_{k+1}\rangle=\langle \mathbf{d}_\mathrm{i }\hat{\alpha}^k,\hat{c}_{k+1}\rangle+\langle \mathbf{d}_\mathrm{b} \hat{\alpha}^k,\hat{c}_{k+1}\rangle=\langle \hat{\alpha}^k,\partial_\mathrm{i }\hat{c}_{k+1}\rangle+\langle \hat{\alpha}^k,\partial_\mathrm{b} \hat{c}_{k+1}\rangle\,.
\end{equation*}
\end{dfn}

\begin{figure}
\centering
    \includegraphics[width=4.5cm]{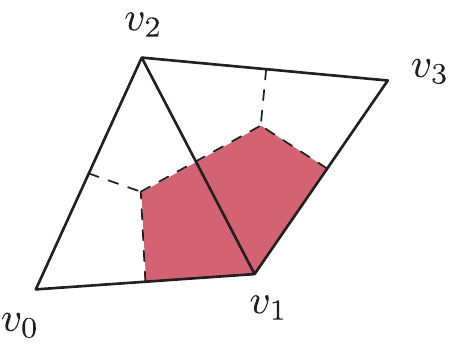}\\
    %\vspace{0.5cm}
  \caption{A $2$-dimensional simplicial complex taken from Figure~3.3, Section 3.6, \cite{Hirani}. The shaded region is a Voronoi dual of the primal vertex $v_1$. The dual boundary, according to \cite{Hirani}, is not the geometric boundary near the boundary of the manifold. However, in line with our construction, the boundary of the dual is the dual of the boundary.}\label{fig3}
\end{figure}

The support volumes of a simplex and its dual cell are the same, which suggests that there is a natural identification between primal $k$-cochains and dual $(n-k)$-cochains. 

In the exterior calculus for smooth manifolds, the Hodge star, denoted $*$, is an isomorphism between the space of $k$-forms and $(n-k)$-forms. Since the Hodge star operator is metric-dependent, in the discrete theory, it is defined as an equality of averages between primal and their dual forms \cite{Hirani,Hiptmair}.

\begin{dfn}\label{df:HodgeS}
The discrete Hodge star is a map $* : \Omega_d^k(K)\rightarrow \Omega_d^{n-k}(\star_\mathrm{i} K)$ defined by its value over simplices and their duals. For a $k$-simplex $\sigma^k$, and a discrete $k$-form $\alpha^k$,
\begin{equation*}
\frac{1}{|\star_\mathrm{i} \sigma^k|}\langle * \alpha^k,\star_\mathrm{i}\sigma^k \rangle=s \frac{1}{|\sigma^k|}\langle \alpha^k,\sigma^k \rangle\,,
\end{equation*}
where $s$ is $\pm 1$ (see \cite{Hirani}).
\end{dfn}

Similarly we can define the discrete Hodge operator on the boundary, that is on an $(n-1)$-dimensional simplicial complex and its dual. It is trivial to show that for a $k$-form $\alpha^k$ the following holds: $**\alpha^k=(-1)^{k(n-k)}$.

\begin{rmk}
The discrete Hodge star can be represented by a matrix (see Section~\ref{Sec:matrixrep}). According to Definition~\ref{df:HodgeS}, this matrix is diagonal. In the case Whitney forms are used, the discrete Hodge operator is sparse but not diagonal in general \cite{Hiptmair}.
\end{rmk}

Next, we define a natural pairing, via so-called primal-dual wedge product, between a primal $k$-cochain and a dual $(n-k)$-cochain. The resulting discrete form is the volume form. In order to insure anticommutativity of the primal-dual wedge product, we take the following definition.
\begin{dfn}
Let $\alpha^k\in \Omega_d^k(K)$ be a primal $k$-form and $\hat{\beta}^{n-k}\in \Omega_d^{n-k}(\star_\mathrm{i} K)$. We define the discrete primal-dual wedge product $\wedge: \Omega_d^k(K)\times \Omega_d^{n-k}(\star_\mathrm{i} K) \rightarrow \Omega_d^{n}(V_k(K))$ by
\begin{equation*}
\begin{split}
\langle \alpha^k\wedge \hat{\beta}^{n-k},V_{\sigma^{k}}\rangle&=\binom{n}{k} \frac{|V_{\sigma^k}|}{|\sigma^k| |\star_\mathrm{i} \sigma^k|}\langle \alpha^k,\sigma^k \rangle \langle \hat{\beta}^{n-k}, \star_\mathrm{i} \sigma^k\rangle\\
&=\langle \alpha^k,\sigma^k \rangle \langle \hat{\beta}^{n-k}, \star_\mathrm{i}\sigma^k\rangle\\
&=(-1)^{k(n-k)}\langle \hat{\beta}^{n-k}\wedge \alpha^k,V_{\sigma^{k}}\rangle\,,
\end{split}
\end{equation*}
where $V_{\sigma^k}$ is the $n$-dimensional support volume obtained by taking the convex hull of the simplex $\sigma^k$ and its dual $\star_\mathrm{i} \sigma^k$.
\end{dfn}

As an illustration, consider the two-dimensional simplicial complex $K$ depicted in Figure~\ref{fig1} and its dual $\star K$ in Figure~\ref{fig2}. Let $\alpha^1\in \Omega_d^1(K)$ and $\hat{\beta}^1\in \Omega_d^1(\star K)$. The dual cell of the primal edge $[v_k,v_i]$ is $[\hat{v}_m,\hat{v}_{m-1}]$, up to a sign depending on the chosen orientation. The primal-dual wedge product of $\alpha^1$ and $\hat{\beta}^1$ on the support volume $V_{[v_k,v_i]}=V_{[\hat{v}_m,\hat{v}_{m-1}]}$, represented by the diamond shaped region generated by $[v_k,v_i]$ and $[\hat{v}_m,\hat{v}_{m-1}]$, is simply a dot product $\alpha([v_k,v_i])\cdot\hat{\beta}^1([\hat{v}_m,\hat{v}_{m-1}])$. Now, in order to look at the primal-dual wedge product on the boundary, let $\gamma^0\in \Omega_d^0(\partial K)$ and $\hat{\eta}^1\in \Omega_d^1(\partial (\star K))$. For instance, $\hat{\eta}^1$ can be a restriction of $\hat{\beta}^1$ on the boundary $\partial (\star K)$. The primal-dual wedge product $\langle \gamma^0\wedge \hat{\eta}^1,V_{v_k}\rangle=\gamma^0(v_k)\cdot \hat{\eta}^1([\hat{v}_p,\hat{v}_{p-1}])$. The volume for $V_{v_k}=V_{[\hat{v}_p,\hat{v}_{p-1}]}$ is simply the measure of the cell $[\hat{v}_p,\hat{v}_{p-1}]$.

Here we note the advantage of employing circumcentric with respect to barycentric dual since one needs to store only volume information about primal and dual cells, and not about the primal-dual convex hulls.

\begin{dfn}
Given two primal discrete $k$-forms, $\alpha^k,\beta^k\in \Omega_d^k(K)$, their discrete $L^2$ inner product, $\langle \alpha^k,\beta^k \rangle_d$ is given by
\begin{equation*}
\begin{split}
\langle \alpha^k, \beta^{k}\rangle_d&=\binom{n}{k} \frac{|V_{\sigma^k}|}{|\sigma^k| |\star_\mathrm{i} \sigma^k|}\langle \alpha^k,\sigma^k \rangle \langle *{\beta}^{k}, \star_\mathrm{i} \sigma^k\rangle\\
&=\langle \alpha^k,\sigma^k \rangle \langle *{\beta}^{k}, \star_\mathrm{i}\sigma^k\rangle\,.
\end{split}
\end{equation*}
\end{dfn}

The proposed definition of the dual boundary operator assures the validity of the summation by parts relation that parallels the integration by parts formula for smooth differential forms. 

\begin{prop}\label{prop:prim-dual}
Let $K$ be an oriented well-centered simplicial complex. Given a primal $(k-1)$-form $\alpha^{k-1}$ and a dual $(n-k)$-discrete form $\hat{\beta}^{n-k}$, then
\begin{equation*}
\begin{split}
\langle \mathbf{d} \alpha^{k-1}\wedge \hat{\beta}^{n-k},K\rangle+(-1)^{k-1}\langle \alpha^{k-1}\wedge  \mathbf{d} \hat{\beta}^{n-k},K\rangle =\langle  \alpha^{k-1}\wedge \hat{\beta}^{n-k},\partial K\rangle\,,
\end{split}
\end{equation*}
where in the boundary pairing $\alpha^{k-1}$ is a primal $(k-1)$-form on $\partial K$, while $ \hat{\beta}^{n-k}$ is a dual $(n-k)$-cochain taken on the boundary dual $\star (\partial K)$.
\end{prop}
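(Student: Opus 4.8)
The plan is to reproduce, at the level of cochains, the two ingredients behind the smooth identity $\int_M \mathrm{d}(\alpha\wedge\beta)=\int_{\partial M}\alpha\wedge\beta$, namely the Leibniz rule and Stokes' theorem. Here both are replaced by their combinatorial counterparts: the discrete Stokes theorem $\langle \mathbf{d}\alpha,c\rangle=\langle\alpha,\partial c\rangle$, and the decomposition $\partial=\partial_\mathrm{i}+\partial_\mathrm{b}$ of the augmented dual boundary operator. Throughout I would use that pairing with $K$ means summing the resulting $n$-form over the support volumes tiling $|K|$, so that $\langle\omega,K\rangle=\sum_\sigma\langle\omega,V_\sigma\rangle$, the sum running over $k$-simplices for the first term and over $(k-1)$-simplices for the second.

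First I would expand the two interior terms with the primal-dual wedge product. For the first term the wedge is evaluated on the volumes $V_{\sigma^k}$, so that $\langle \mathbf{d}\alpha^{k-1}\wedge\hat{\beta}^{n-k},K\rangle=\sum_{\sigma^k}\langle\mathbf{d}\alpha^{k-1},\sigma^k\rangle\langle\hat{\beta}^{n-k},\star_\mathrm{i}\sigma^k\rangle$, and the discrete Stokes theorem rewrites $\langle\mathbf{d}\alpha^{k-1},\sigma^k\rangle=\langle\alpha^{k-1},\partial\sigma^k\rangle$. Expanding $\partial\sigma^k$ over its faces turns this into a sum over incident pairs $\sigma^{k-1}\prec\sigma^k$ weighted by the primal incidence numbers $[\sigma^{k-1}\!:\!\sigma^k]$. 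For the second term the wedge lives on the volumes $V_{\sigma^{k-1}}$, giving $\langle\alpha^{k-1}\wedge\mathbf{d}\hat{\beta}^{n-k},K\rangle=\sum_{\sigma^{k-1}}\langle\alpha^{k-1},\sigma^{k-1}\rangle\langle\mathbf{d}\hat{\beta}^{n-k},\star_\mathrm{i}\sigma^{k-1}\rangle$, and the definition of the dual exterior derivative together with the split $\partial=\partial_\mathrm{i}+\partial_\mathrm{b}$ yields $\langle\mathbf{d}\hat{\beta}^{n-k},\star_\mathrm{i}\sigma^{k-1}\rangle=\langle\hat{\beta}^{n-k},\partial_\mathrm{i}\star_\mathrm{i}\sigma^{k-1}\rangle+\langle\hat{\beta}^{n-k},\partial_\mathrm{b}\star_\mathrm{i}\sigma^{k-1}\rangle$. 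The $\partial_\mathrm{i}$ part is again a sum over the same incident pairs $\sigma^{k-1}\prec\sigma^k$, whereas the $\partial_\mathrm{b}$ part is supported only on the boundary simplices $\sigma^{k-1}\subset\partial K$.

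The crux is to show that the first term cancels $(-1)^{k-1}$ times the $\partial_\mathrm{i}$ part \emph{exactly}, leaving no interior remainder. Since both are sums over the same index set of incident pairs, this reduces to one combinatorial identity relating the primal incidence number $[\sigma^{k-1}\!:\!\sigma^k]$ to the orientation sign $s$ with which $\star_\mathrm{i}\sigma^k$ appears in $\partial_\mathrm{i}\star_\mathrm{i}\sigma^{k-1}$; concretely I expect $s=(-1)^{k}[\sigma^{k-1}\!:\!\sigma^k]$, equivalently that in matrix form the interior dual derivative equals $(-1)^k$ times the transpose of the primal coboundary. I regard verifying this sign as the main obstacle: it is precisely the point at which the orientation coefficients $s_{\sigma^k,\ldots}$ built into the definitions of $\star_\mathrm{i}$ and $\partial_\mathrm{i}$ for consistency with the ambient volume form must be unwound, using the orthogonality of primal and dual cells under circumcentric duality. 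Granting the identity, every incident pair contributes a factor $[\sigma^{k-1}\!:\!\sigma^k]+(-1)^{k-1}s=0$, so the two interior sums annihilate. (Note that this exactness, with no leftover interior integrand, is a genuinely discrete phenomenon: it is the augmentation $\partial_\mathrm{b}$, absent from the purely geometric dual boundary, that diverts the boundary faces out of $\partial_\mathrm{i}$.)

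What survives is $(-1)^{k-1}\sum_{\sigma^{k-1}\subset\partial K}\langle\alpha^{k-1},\sigma^{k-1}\rangle\langle\hat{\beta}^{n-k},\partial_\mathrm{b}\star_\mathrm{i}\sigma^{k-1}\rangle$, and by definition $\partial_\mathrm{b}\star_\mathrm{i}\sigma^{k-1}=\star_\mathrm{b}(s_{\sigma^{k-1}}\sigma^{k-1})$ is the boundary dual cell, an $(n-k)$-cell of $\partial(\star K)$. Recognising this sum as the primal-dual wedge product evaluated on the boundary volumes, where each boundary $\sigma^{k-1}$ pairs with its boundary dual $\star_\mathrm{b}\sigma^{k-1}$, identifies it with $\langle\alpha^{k-1}\wedge\hat{\beta}^{n-k},\partial K\rangle$. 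The final check is that the leftover sign $(-1)^{k-1}s_{\sigma^{k-1}}$ is absorbed by the induced-orientation convention on $\partial K$ and $\partial(\star K)$, which is exactly how the boundary star coefficients $s_{\sigma^k,\ldots,\sigma^{n-1}}$ were fixed in the definition of $\star_\mathrm{b}$; this is the boundary analogue of the interior sign bookkeeping and makes transparent why the augmented dual boundary operator, rather than the geometric one, is the right object.
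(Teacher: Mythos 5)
Your proposal is correct and follows essentially the same route as the paper's proof: expand both wedge products over support volumes, apply the primal discrete Stokes theorem to the first term and the definition of the dual derivative with the split $\partial=\partial_\mathrm{i}+\partial_\mathrm{b}$ to the second, cancel the interior incident-pair sums via the orientation signs, and identify the surviving $\partial_\mathrm{b}$ part with the boundary pairing. Your explicit sign identity $s=(-1)^k[\sigma^{k-1}\!:\!\sigma^k]$ (equivalently $D_\mathrm{i}=(-1)^k(D^{k-1})^\textsc{t}$) is exactly the orientation convention the paper invokes with the words ``inducing the orientation of the dual such that $s_{\sigma^k}=s_{\sigma^{k-1}}=(-1)^k$,'' and which it records in matrix form in Section~\ref{Sec:matrixrep}.
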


\begin{proof}
We have
\begin{equation*}
\begin{split}
\langle \mathbf{d}\alpha^{k-1}\wedge\hat{\beta}^{n-k},K\rangle&=\sum_{\sigma^{k-1}\in K}\langle \mathbf{d} \alpha^{k-1},\sigma^{k}\rangle \langle \hat{\beta}^{n-k},\star_\mathrm{i} \sigma^{n-k} \rangle\\
&=\sum_{\sigma^{k-1}\in K}\langle  \alpha^{k-1},\partial \sigma^{k}\rangle \langle \hat{\beta}^{n-k},\star_\mathrm{i} \sigma^{n-k} \rangle\\
&=\sum_{\sigma^{k-1}\in K}\sum_{\sigma^{k-1}\prec\sigma^{k}}\langle  \alpha^{k-1}, \sigma^{k}\rangle \langle \hat{\beta}^{n-k},\star_\mathrm{i} \sigma^{n-k} \rangle\,,
\end{split}
\end{equation*}
and 
\begin{equation*}
\begin{split}
\langle \alpha^{k-1}&\wedge  \mathbf{d} \hat{\beta}^{n-k},K\rangle=\sum_{\sigma^{k-1}}\langle \alpha^{k-1},\sigma^{k-1}\rangle \langle \mathbf{d} \hat{\beta}^{n-k},\star_\mathrm{i} \sigma^{k-1} \rangle=\sum_{\sigma^{k-1}}\langle \alpha^{k-1},\sigma^{k-1}\rangle \langle \hat{\beta}^{n-k}, \partial(\star_\mathrm{i}\sigma^{k-1}) \rangle\\
&=\sum_{\sigma^{k-1}}\langle \alpha^{k-1},\sigma^{k-1}\rangle \left(\sum_{\sigma^{k-1}\prec\sigma^k}\langle \hat{\beta}^{n-k}, \star_\mathrm{i} (s_{\sigma^{k}}\sigma^{k})\rangle+ \langle \hat{\beta}^{n-k}, \star_\mathrm{b} (s_{\sigma^{k-1}}\sigma^{k-1})\rangle\right)\,.
\end{split}
\end{equation*}
Inducing the orientation of the dual such that $s_{\sigma^{k}}=s_{\sigma^{k-1}}=(-1)^k$ completes the proof.
\end{proof}

\begin{rmk}\label{rmkdecom}Decomposing the dual form $\hat{\beta}^{n-k}$ into the internal and the boundary part as
$\hat{\beta}^{n-k}=\bigg\{ \begin{array}{l} \hat{\beta}_\mathrm{i}\in \Omega_d^{n-k}(\star_\mathrm{i}K) ~\,\mathrm{on}~\star_\mathrm{i}\!K \\ \hat{\beta}_\mathrm{b}\in \Omega_d^{n-k}(\star_\mathrm{b}K) ~{\mathrm{on}}~\partial(\star K)\end{array}$ and decomposing the dual exterior derivative in the same manner, the summation by parts formula can be written as
\begin{equation}\label{eq:sumbypart}
\begin{split}
\langle \mathbf{d} \alpha^{k-1}\wedge \hat{\beta}_\mathrm{i} ,K\rangle&+(-1)^{k-1}\langle \alpha^{k-1}\wedge ( \mathbf{d}_\mathrm{i} \hat{\beta}_\mathrm{i} +\mathbf{d}_\mathrm{b} \hat{\beta}_\mathrm{b} ),K\rangle =\langle  \alpha^{k-1}\wedge \hat{\beta}_\mathrm{b} ,\partial K\rangle\,.
\end{split}
\end{equation}
\end{rmk}

In the standard literature of discrete exterior calculus, the codifferential operator is adjoint to the discrete exterior derivative, with respect to the inner products of discrete forms \cite{Hirani}. According to Proposition~\ref{prop:prim-dual}, this is not the case since on the right a term corresponding to the primal-dual pairing on the boundary appears. As the subsequent section demonstrates, this term is precisely responsible for the inclusion of the boundary variables in the discretized Stokes-Dirac structure.

\section{Dirac Structures on a Simplicial Complex}\label{Sec4}
The Stokes-Dirac structure, which captures a differential symmetry of the Hamiltonian field equations, as presented in \cite{vdSM02}, is metric-independent. The essence of its construction lies in the antisymmetry of the wedge product and the Stokes theorem. In a discrete framework, the primal-primal wedge product \cite{Hirani} inherits a number of important properties of the cup product \cite{Munkres}, such as bilinearity, anti-commutativity and naturality under pullback; however, it is degenerate and thus unsuitable for defining a Dirac structure. This motivates a formulation of a Dirac structure on a simplicial complex and its dual. We introduce Dirac structures with respect to the bilinear pairing between primal and duals forms on the underlying discrete manifold. We call these Dirac structures \emph{simplicial Dirac structures}.

In the discrete setting, the smooth manifold $M$ is replaced by an $n$-dimensional well-centered oriented manifold-like simplicial complex $K$. The flow and the effort spaces will be the spaces of complementary primal and dual forms. The elements of these two spaces are paired via the discrete primal-dual wedge product. Since the Stokes-Dirac structure $\mathcal{D}$ expresses the coupling between $f_p$ and $e_q$, also $f_q$ and $e_p$, via the exterior derivative, whose discrete analogue maps primal into primal and dual into dual cochains, the flow space cannot be entirely built on a primal simplicial complex and the effort space on a dual cell complex, or vice versa. Instead, the flow and the effort spaces will be mixed spaces of the primal and dual cochains. One of the two possible choices is
\begin{equation*}
\mathcal{F}_{p,q}^d=\Omega_d^p(\star_\mathrm{i} K)\times \Omega_d^q( K)\times \Omega_d^{n-p}(\partial (K))\,
\end{equation*}
and 
\begin{equation*}
\mathcal{E}_{p,q}^d=\Omega_d^{n-p}( K)\times \Omega_d^{n-q}(\star_\mathrm{i} K)\times \Omega_d^{n-q}(\partial (\star K))\,.
\end{equation*}

The primal-dual wedge product ensures a bijective relation between the primal and dual forms, between the flows and efforts. A natural discrete mirror of the bilinear form (\ref{eq-aj9}) is a symmetric pairing on the product space $\mathcal{F}_{p,q}^d\times \mathcal{E}_{p,q}^d$ defined by
\begin{equation}\label{eq:bild}
\begin{split}
\langle\!\langle (&\underbrace{\hat{f}_p^1,{f}_q^1,{f}_b^1}_{\in \mathcal{F}_{p,q}^d},\underbrace{{e}_p^1,\hat{e}_q^1,\hat{e}_b^1}_{\in \mathcal{E}_{p,q}^d}), (\hat{f}_p^2,{f}_q^2,{f}_b^2,{e}_p^2,\hat{e}_q^2,\hat{e}_b^2)\rangle\!\rangle_d\\
&= \langle {e}_p^1\wedge \hat{f}_p^2+\hat{e}_q^1\wedge {f}_q^2+ {e}_p^2\wedge \hat{f}_p^1+\hat{e}_q^2\wedge {f}_q^1,K \rangle+\langle \hat{e}_b^1\wedge {f}_b^2+ \hat{e}_b^2\wedge f_b^1,\partial K \rangle
  \,.
  \end{split}
\end{equation}
A discrete analogue of the Stokes-Dirac structure is the finite-dimensional Dirac structure constructed in the following theorem.

\begin{thm}\label{th:pdDirac}
Given linear spaces $\mathcal{F}_{p,q}^d$ and $\mathcal{E}_{p,q}^d$, and the bilinear form $\langle\!\langle, \rangle\!\rangle_d$. The linear subspace $\mathcal{D}_d\subset\mathcal{F}_{p,q}^d\times \mathcal{E}_{p,q}^d$ defined by
\begin{equation}\label{eq:Dir-prim-dual}
\begin{split}
& \mathcal{D}_{d}=\big\{ (\hat{f}_p, {f}_q, {f}_b,{e}_p,\hat{e}_q,\hat{e}_b)\in \mathcal{F}_{p,q}^d\times \mathcal{E}_{p,q}^d\big |\\
& \left(\begin{array}{c}\hat{f}_p \\ {f}_q \end{array} \right) =\left(\begin{array}{cc} 0  & (-1)^{pq+1}  \mathbf{d}_\mathrm{i}\\  \mathbf{d} & 0\end{array} \right) \left(\begin{array}{c} {e}_p \\ \hat{e}_q\end{array} \right)+ (-1)^{pq+1} \left(\begin{array}{c}  \mathbf{d}_\mathrm{b} \\ 0 \end{array}\right) \hat{e}_b\,,\\
& \begin{array}{c}~~~~~f_b \end{array}= ~(-1)^{p}{e}_p|_{\partial K}\,\big \}\,
 \end{split}
\end{equation}
is a Dirac structure with respect to the pairing $\langle\!\langle, \rangle\!\rangle_{d}$ .
\end{thm}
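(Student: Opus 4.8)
The plan is to invoke the standard characterization of a Dirac structure relative to a nondegenerate symmetric pairing: for such a form on a finite-dimensional space one always has $\dim\mathcal{D}_d+\dim\mathcal{D}_d^\perp=\dim(\mathcal{F}_{p,q}^d\times\mathcal{E}_{p,q}^d)$, so it suffices to prove that $\mathcal{D}_d$ is isotropic, $\mathcal{D}_d\subseteq\mathcal{D}_d^\perp$, and that $\dim\mathcal{D}_d=\tfrac12\dim(\mathcal{F}_{p,q}^d\times\mathcal{E}_{p,q}^d)$; the two together force $\mathcal{D}_d=\mathcal{D}_d^\perp$. Nondegeneracy of $\langle\!\langle,\rangle\!\rangle_d$ in \eqref{eq:bild} follows from the nondegeneracy of the primal-dual wedge pairing. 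I would therefore carry out the dimension count and the isotropy check as two independent steps.

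For the dimension count, I observe that the three defining relations in \eqref{eq:Dir-prim-dual} express the flows $\hat{f}_p,f_q,f_b$ as explicit linear functions of the efforts $e_p,\hat{e}_q,\hat{e}_b$, so $\mathcal{D}_d$ is the graph of a linear map $\mathcal{E}_{p,q}^d\to\mathcal{F}_{p,q}^d$ and $\dim\mathcal{D}_d=\dim\mathcal{E}_{p,q}^d$. It then remains to note that $\dim\mathcal{F}_{p,q}^d=\dim\mathcal{E}_{p,q}^d$: the circumcentric duality gives a bijection between primal $k$-simplices and interior dual $(n-k)$-cells, whence $\dim\Omega_d^p(\star_\mathrm{i}K)=\dim\Omega_d^{n-p}(K)$ and $\dim\Omega_d^q(K)=\dim\Omega_d^{n-q}(\star_\mathrm{i}K)$, while on the boundary, using $p+q=n+1$, the boundary star operator matches $(n-q)$-cells of $\partial(\star K)$ with $(q-1)=(n-p)$-simplices of $\partial K$, so $\dim\Omega_d^{n-q}(\partial(\star K))=\dim\Omega_d^{n-p}(\partial K)$. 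Hence $\dim\mathcal{D}_d=\tfrac12\dim(\mathcal{F}_{p,q}^d\times\mathcal{E}_{p,q}^d)$.

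For isotropy it is enough, by symmetry of the pairing and polarization, to show the quadratic form vanishes on $\mathcal{D}_d$. Setting the two arguments equal in \eqref{eq:bild} and substituting the relations $\hat{f}_p=(-1)^{pq+1}(\mathbf{d}_\mathrm{i}\hat{e}_q+\mathbf{d}_\mathrm{b}\hat{e}_b)$, $f_q=\mathbf{d}e_p$, $f_b=(-1)^p e_p|_{\partial K}$ reduces the claim, up to the harmless factor $2$, to
\begin{equation*}
(-1)^{pq+1}\langle e_p\wedge(\mathbf{d}_\mathrm{i}\hat{e}_q+\mathbf{d}_\mathrm{b}\hat{e}_b),K\rangle+\langle\hat{e}_q\wedge\mathbf{d}e_p,K\rangle+(-1)^p\langle\hat{e}_b\wedge e_p,\partial K\rangle=0\,.
\end{equation*}
Here I would apply the summation by parts identity \eqref{eq:sumbypart} of Proposition~\ref{prop:prim-dual} with $\alpha^{k-1}=e_p$, $\hat{\beta}_\mathrm{i}=\hat{e}_q$, $\hat{\beta}_\mathrm{b}=\hat{e}_b$, and $k=q$, together with the anticommutativity of the primal-dual wedge product both in the interior and on the boundary, to rewrite $\langle\hat{e}_q\wedge\mathbf{d}e_p,K\rangle$ as a multiple of $\langle e_p\wedge(\mathbf{d}_\mathrm{i}\hat{e}_q+\mathbf{d}_\mathrm{b}\hat{e}_b),K\rangle$ plus a multiple of the boundary pairing $\langle e_p\wedge\hat{e}_b,\partial K\rangle$.

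The hard part is the sign bookkeeping: one must verify that the exponents conspire so that both the interior and the boundary contributions cancel. Using $n=p+q-1$, the interior cancellation reduces to the parity identity $pq+1\equiv q(n-q)+(q-1)\pmod 2$, and the boundary cancellation to $q(n-q)\equiv p+(n-p)(p-1)+1\pmod 2$; both hold after reduction modulo $2$, and it is precisely the prefactors $(-1)^{pq+1}$ attached to $\mathbf{d}_\mathrm{i},\mathbf{d}_\mathrm{b}$ and $(-1)^p$ in the boundary relation that are tuned to make them hold. Once these cancellations are confirmed the quadratic form vanishes, giving $\mathcal{D}_d\subseteq\mathcal{D}_d^\perp$, and combined with the dimension count this yields $\mathcal{D}_d=\mathcal{D}_d^\perp$.
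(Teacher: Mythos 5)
Your proposal is correct and takes essentially the same route as the paper: isotropy is obtained from the summation-by-parts formula (\ref{eq:sumbypart}) of Proposition~\ref{prop:prim-dual} combined with anticommutativity of the primal-dual wedge product (your two parity identities are precisely the required interior and boundary cancellations, and both hold using $n=p+q-1$), and equality $\mathcal{D}_d=\mathcal{D}_d^\perp$ then follows from the dimension count and nondegeneracy of $\langle\!\langle,\rangle\!\rangle_d$. The only differences are organizational: you verify the quadratic form on the diagonal and polarize, whereas the paper substitutes two arbitrary elements of $\mathcal{D}_d$ (the polarized version of the same computation), and you spell out the graph argument and the primal-dual counting bijections behind $\dim\mathcal{F}_{p,q}^d=\dim\mathcal{E}_{p,q}^d=\dim\mathcal{D}_d$, which the paper merely asserts.
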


\begin{proof}
In order to show that $\mathcal{D}_d\subset\mathcal{D}_{d} ^\perp$, let $(\hat{f}_p^1,{f}_q^1,{f}_b^1,{e}_p^1,\hat{e}_q^1,\hat{e}_b^1)\in \mathcal{D}_d$, and consider any $(\hat{f}_p^2,{f}_q^2,{f}_b^2,{e}_p^2,\hat{e}_q^2,\hat{e}_b^2)\in \mathcal{D}_d$. Substituting (\ref{eq:Dir-prim-dual}) into (\ref{eq:bild}) yields
\begin{equation}\label{eq-12}
\begin{split}
\!\!\!\!\langle (-1)^{pq+1} e_p^1 \wedge \left(\mathbf{d}_\mathrm{i} \hat{e}_q^2  + \mathbf{d}_\mathrm{b} \hat{e}_b^2 \right ) + \hat{e}_q^1 \wedge \mathbf{d} e_p^2 + (-1)^{pq+1}e_p^2\wedge \left(\mathbf{d}_\mathrm{i} \hat{e}_q^1 +\mathbf{d}_\mathrm{b} \hat{e}_b^1\right) +\hat{e}_q^2 \wedge \mathbf{d} e_p^1, K \rangle\\
+  (-1)^{p} \langle \hat{e}_b^1 \wedge e_p^2 + \hat{e}_b^2 \wedge e_p^1, \partial K \rangle\,.
\end{split}
\end{equation}
By the anticommutativity of the primal-dual wedge product on $K$
\begin{equation*}
\begin{split}
\langle \hat{e}_q^1 \wedge \mathbf{d} e_p^2, K \rangle&= (-1)^{q(p-1)} \langle \mathbf{d} e_p^2\wedge \hat{e}_q^1 ,K\rangle\\
\langle \hat{e}_q^2 \wedge \mathbf{d} e_p^1, K \rangle&= (-1)^{q(p-1)} \langle \mathbf{d} e_p^1\wedge \hat{e}_q^2 ,K\rangle\,,
\end{split}
\end{equation*}
and on the boundary $\partial K$
\begin{equation*}
\begin{split}
\langle \hat{e}_b^1 \wedge e_p^2, \partial K \rangle&= (-1)^{(p-1)(q-1)} \langle e_p^2 \wedge \hat{e}_b^1 ,\partial K\rangle\\
\langle \hat{e}_b^2 \wedge e_p^1, \partial K \rangle&= (-1)^{(p-1)(q-1)} \langle e_p^1 \wedge \hat{e}_b^2 ,\partial K\rangle\,,
\end{split}
\end{equation*}
the expression (\ref{eq-12}) can be rewritten as
\begin{equation*}\label{eq-12n}
\begin{split}
(-1)^{q(p-1)}\langle \mathbf{d}e_p^2 \wedge \hat{e}_q^1 + (-1)^{n-p}e_p^2 \wedge \left(\mathbf{d}_\mathrm{i} \hat{e}_q^1  + \mathbf{d}_\mathrm{b} \hat{e}_b^1 \right ) , K \rangle\\
+(-1)^{q(p-1)}\langle \mathbf{d}e_p^1 \wedge \hat{e}_q^2 + (-1)^{n-p}e_p^1 \wedge \left(\mathbf{d}_\mathrm{i} \hat{e}_q^2  + \mathbf{d}_\mathrm{b} \hat{e}_b^2 \right ) , K \rangle\\
+(-1)^{p+(p-1)(q-1)}\langle \hat{e}_b^1 \wedge e_p^2+ \hat{e}_b^2 \wedge e_p^1,\partial K\rangle\,.
\end{split}
\end{equation*}
According to the discrete summation by parts formula (\ref{eq:sumbypart}), the following holds
\begin{equation*}
\begin{split}
&\!\!\langle \mathbf{d}e_p^2 \wedge \hat{e}_q^1 \!+\! (-1)^{n-p}\!e_p^2 \wedge \!\left(\mathbf{d}_\mathrm{i} \hat{e}_q^1  + \mathbf{d}_\mathrm{b} \hat{e}_b^1 \right ) , K \rangle\!=\!\langle e_p^2\wedge \hat{e}_b^1, \partial K\rangle\\
&\!\!\langle \mathbf{d}e_p^1 \wedge \hat{e}_q^2 \!+ \!(-1)^{n-p}e_p^1 \wedge \!\left(\mathbf{d}_\mathrm{i} \hat{e}_q^2  + \mathbf{d}_\mathrm{b} \hat{e}_b^2 \right ) , K \rangle\!=\!\langle e_p^1\wedge \hat{e}_b^2, \partial K\rangle\,.
\end{split}
\end{equation*}
Hence, (\ref{eq-12}) is equal to $0$, and thus $\mathcal{D}_d \subset \mathcal{D}_d^\perp$.

Since $\mathrm{dim}\,\mathcal{F}_{p,q}^d=\mathrm{dim}\,\mathcal{E}_{p,q}^d=\mathrm{dim}\,\mathcal{D}_d$, and $\langle\!\langle, \rangle\!\rangle_{d}$ is a non-degenerate form, $\mathcal{D}_d =\mathcal{D}_d^\perp$. %\QED
\end{proof}

\begin{rmk}As with the continuous setting, the simplicial Dirac structure is algebraically compositional. Since the simplicial Dirac structure $\mathcal{D}_d $ is a finite-dimensional constant Dirac structure, it is integrable.
\end{rmk}

The other possible discrete analogue of the Stokes-Dirac structure is defined on the spaces
\begin{equation*}
\begin{split}
\tilde{\mathcal{F}}_{p,q}^d&=\Omega_d^p (K)\times \Omega_d^q( \star_\mathrm{i} K)\times \Omega_d^{n-p}(\partial (\star K))\\
\tilde{\mathcal{E}}_{p,q}^d&=\Omega_d^{n-p}( \star_\mathrm{i} K)\times \Omega_d^{n-q}( K)\times \Omega_d^{n-q}(\partial K)\,.
\end{split}
\end{equation*}
A natural discrete mirror of the bilinear form (\ref{eq-aj9}) in this case is a symmetric pairing on the product space $\tilde{\mathcal{F}}_{p,q}^d\times \tilde{\mathcal{E}}_{p,q}^d$ defined by
\begin{equation*}
\begin{split}
\langle\!\langle (&\underbrace{{f}_p^1,\hat{f}_q^1,{\hat{f}}_b^1}_{\in \tilde{\mathcal{F}}_{p,q}^d},\underbrace{{\hat{e}}_p^1,{e}_q^1,{e}_b^1}_{\in \tilde{\mathcal{E}}_{p,q}^d}), ({f}_p^2,\hat{f}_q^2,\hat{f}_b^2,\hat{e}_p^2,{e}_q^2,{e}_b^2)\rangle\!\rangle_{\tilde{d}} \\
&= \langle \hat{e}_p^1\wedge {f}_p^2+ {e}_q^1\wedge {\hat{f}}_q^2+ {\hat{e}}_p^2\wedge {f}_p^1+ {e}_q^2\wedge \hat{f}_q^1,K \rangle+\langle {e}_b^1\wedge \hat{f}_b^2+ {e}_b^2\wedge \hat{f}_b^1,\partial K \rangle\,.
  \end{split}
\end{equation*}

\begin{thm}\label{th:pdDirac}
The linear space $\tilde{\mathcal{D}}_d$ defined by 
\begin{equation}\label{eq:Dir-prim-dual2}
\begin{split}
&\tilde{\mathcal{D}}_d=\big\{ ({f}_p, \hat{f}_q, {f}_b,{e}_p, {e}_q, {e}_b)\in\tilde{ \mathcal{F}}_{p,q}^d\times \tilde{\mathcal{E}}_{p,q}^d\big |\\
& \left(\!\!\begin{array}{c} {f}_p \\ {f}_q\end{array}\!\!\right)=\left(\!\!\begin{array}{cc}0 & (-1)^{pq+1}  \mathbf{d} \\  \mathbf{d}_\mathrm{i} & 0\end{array}\!\!\right)\left(\!\!\begin{array}{c} \hat{e}_p \\ {e}_q\end{array}\!\!\right)+\left(\!\!\begin{array}{c}  0\\ \mathbf{d}_\mathrm{b} \end{array}\!\!\right) \hat{f}_b\,,\\
&\begin{array}{c}\,\,~e_b ~\end{array}= ~(-1)^{p}{e}_q|_{\partial K}\,
\big \}\,
 \end{split}
\end{equation}
is a Dirac structure with respect to the bilinear pairing $\langle\!\langle, \rangle\!\rangle_{\tilde{d}}$.
\end{thm}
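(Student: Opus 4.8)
The plan is to mirror exactly the argument used for Theorem~\ref{th:pdDirac} (the first simplicial Dirac structure $\mathcal{D}_d$), since $\tilde{\mathcal{D}}_d$ is its dual counterpart obtained by swapping the roles of primal and dual complexes. First I would take an arbitrary pair of elements $({f}_p^1,\hat{f}_q^1,{f}_b^1,\hat{e}_p^1,{e}_q^1,{e}_b^1)$ and $({f}_p^2,\hat{f}_q^2,{f}_b^2,\hat{e}_p^2,{e}_q^2,{e}_b^2)$ in $\tilde{\mathcal{D}}_d$, substitute the defining relations (\ref{eq:Dir-prim-dual2}) into the bilinear form $\langle\!\langle,\rangle\!\rangle_{\tilde{d}}$, and collect the resulting terms. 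The goal of this computation is to show the pairing vanishes, giving $\tilde{\mathcal{D}}_d\subset\tilde{\mathcal{D}}_d^\perp$.

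The key structural point is that in this variant the exterior derivative coupling $e_q\mapsto f_p$ goes through the \emph{full} dual derivative $\mathbf{d}$ (decomposing as $\mathbf{d}_\mathrm{i}+\mathbf{d}_\mathrm{b}$), whereas the coupling $\hat{e}_p\mapsto f_q$ uses the primal interior derivative $\mathbf{d}_\mathrm{i}$, and the boundary flow $\hat{f}_b$ enters through $\mathbf{d}_\mathrm{b}$. Consequently the summation-by-parts identity (\ref{eq:sumbypart}) must be applied with the roles of primal and dual interchanged: here $e_q$ plays the role of the primal $(k-1)$-form and $\hat{e}_p$ the role of the dual $(n-k)$-cochain, so that the boundary term produced on the right of (\ref{eq:sumbypart}) pairs the trace of $e_q$ (which equals $\pm e_b$ by the constraint $e_b=(-1)^p e_q|_{\partial K}$) against the dual boundary part. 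After using anticommutativity of the primal-dual wedge product to line up factors, the volume integrals will cancel in pairs against the boundary pairings exactly as in the proof above, leaving zero. I would verify that the sign bookkeeping with $(-1)^{pq+1}$, $(-1)^{n-p}$, and $(-1)^p$ is consistent, which is the only place where care is genuinely required.

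Having established $\tilde{\mathcal{D}}_d\subset\tilde{\mathcal{D}}_d^\perp$, the reverse inclusion follows by the same dimension-counting argument: since $\mathrm{dim}\,\tilde{\mathcal{F}}_{p,q}^d=\mathrm{dim}\,\tilde{\mathcal{E}}_{p,q}^d=\mathrm{dim}\,\tilde{\mathcal{D}}_d$ and $\langle\!\langle,\rangle\!\rangle_{\tilde{d}}$ is non-degenerate, the orthogonal complement has complementary dimension, forcing $\tilde{\mathcal{D}}_d=\tilde{\mathcal{D}}_d^\perp$.

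I expect the main obstacle to be purely clerical rather than conceptual: getting the sign factors and the correct grading to match so that summation by parts applies cleanly. In particular, one must check that the boundary constraint $e_b=(-1)^p e_q|_{\partial K}$ together with the $\mathbf{d}_\mathrm{b}\hat{f}_b$ term produces precisely the boundary contribution needed to annihilate the residual boundary pairings from (\ref{eq:sumbypart}); unlike the first theorem, the boundary variable here is sourced on the flow side ($\hat{f}_b$) rather than on the effort side, so the direction in which the boundary term is consumed is reversed, and I would double-check that the orientation convention $s_{\sigma^k}=s_{\sigma^{k-1}}=(-1)^k$ fixed in Proposition~\ref{prop:prim-dual} is the one that makes this reversal come out correctly.
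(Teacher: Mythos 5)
Your proposal is correct and is essentially the paper's own (implicit) argument: the paper proves only the first simplicial Dirac structure in detail and leaves this mirror-image case to exactly the substitution, wedge-anticommutativity, summation-by-parts, and dimension-count reasoning you describe, with $e_q$ playing the primal $(k-1)$-form role and $\hat{e}_p$, $\hat{f}_b$ the dual interior and boundary roles in (\ref{eq:sumbypart}). One clerical slip in your description: the coupling $e_q\mapsto f_p$ is via the \emph{primal} exterior derivative $\mathbf{d}$, which does not decompose, while the operators $\mathbf{d}_\mathrm{i}$ and $\mathbf{d}_\mathrm{b}$ acting on $\hat{e}_p$ and $\hat{f}_b$ are the \emph{dual} derivatives (your primal/dual labels are swapped there), but since you assign the correct roles in the summation-by-parts step this does not affect the argument.
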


In the following section, the simplicial Dirac structures (\ref{eq:Dir-prim-dual}) and (\ref{eq:Dir-prim-dual2}) will be used as \emph{terminus a quo} for the geometric formulation of spatially discrete port-Hamiltonian systems.

\section{Port-Hamiltonian Dynamics on a Simplicial Complex}\label{Sec5}
In the continuous theory, a distributed-parameter port-Hamiltonian system is defined with respect to the Stokes-Dirac structure (\ref{eq-7Cont}) by imposing constitutive relations. On the other hand, in the discrete framework one can define an open Hamiltonian system with respect to the simplicial Dirac structure $\mathcal{D}_d$ or the simplicial structure $\tilde{\mathcal{D}}_d$. The choice of the structure has immediate consequence on the open dynamics since it restricts the choice of freely chosen boundary efforts or flows. Firstly, we define dynamics with respect to the structure (\ref{eq:Dir-prim-dual}) and (\ref{eq:Dir-prim-dual2}). Then, in the manner of finite-dimensional port-Hamiltonian systems, we include energy dissipation by terminating some of the ports by resistive elements.

\subsection{Port-Hamiltonian systems}
Let a function $\mathcal{H}: \Omega_d^p(\star_\mathrm{i} K)\times \Omega_d^q(K)\rightarrow \mathbb{R}$ stand for the Hamiltonian 
$(\hat{\alpha}_p,\alpha_q)\mapsto \mathcal{H}(\hat{\alpha}_p,\alpha_q)$, with $\hat{\alpha}_p \in \Omega_d^p(\star_\mathrm{i} K)$ and $\alpha_q\in \Omega_d^q(K)$. The value of the Hamiltonian after arbitrary variations of $\hat{\alpha}_p$ and $\alpha_q$ for $\delta\hat{\alpha}_p\in \Omega_d^p(\star_\mathrm{i} K)$ and $\delta{\alpha}_q\in \Omega_d^q( K)$, respectively, can, by Taylor expansion, be expressed as
\begin{equation}%\label{eq-6Nat}
\begin{split}
\mathcal{H}(\hat{\alpha}_p+\delta\hat{\alpha}_p,\alpha_q+\delta\alpha_q)&=\mathcal{H}(\hat{\alpha}_p,\alpha_q)+\langle \frac{\partial \mathcal{H}}{\partial \hat{\alpha}_p} \wedge \delta\hat{\alpha}_p + \hat{\frac{\partial \mathcal{H}}{\partial \alpha_q}}\wedge \delta\alpha_q ,K\rangle\\&~~+\,\textrm{higher~order~terms~in}~\delta\hat{\alpha}_p,\delta\alpha_q\,.
 \end{split}
\end{equation}
Here, it is important to emphasize that the variations $\delta\hat{\alpha}_p,\delta\alpha_q$ are not restricted to vanish on the boundary.

A time derivative of $\mathcal{H}$ along an arbitrary trajectory $t\rightarrow (\hat{\alpha}_p(t),\alpha_q(t))\in  \Omega_d^p(\star_\mathrm{i} K)\times  \Omega_d^q( K)$, $t\in \mathbb{R}$, is
\begin{equation}%\label{eq-6Nat}
\begin{split}
\frac{\mathrm{d}}{\mathrm{d}t}\mathcal{H}(\hat{\alpha}_p,\alpha_q)=\langle \frac{\partial \mathcal{H}}{\partial \hat{\alpha}_p} \wedge \frac{\partial \hat{\alpha}_p}{\partial t} + \hat{\frac{\partial \mathcal{H}}{\partial \alpha_q}}\wedge \frac{\partial \alpha_q }{\partial t},K\rangle\,.
 \end{split}
\end{equation}

The relation between the simplicial Dirac structure (\ref{eq:Dir-prim-dual}) and time derivatives of the variables are
\begin{equation}%\label{eq-6Nat}
\begin{split}
\hat{f}_p=-\frac{\partial \hat{\alpha}_p}{\partial t}\,,~~ f_q=-\frac{\partial \alpha_q }{\partial t}\,,
 \end{split}
\end{equation}
while the coenergy variables are set
\begin{equation}%\label{eq-6Nat}
\begin{split}
e_p=\frac{\partial \mathcal{H}}{\partial \hat{\alpha}_p}\,,~~ \hat{e}_q=\hat{\frac{\partial \mathcal{H} }{\partial \alpha_q}}\,.
 \end{split}
\end{equation}

This allows us to define the spatially discrete, and thus finite-dimensional, port-Hamiltonian system on a simplicial complex $K$ (and its dual $\star K$) by
\begin{equation}\label{eq-38d}
\begin{split}
\!\!\!\!\!\!\!\left(\!\!\!\begin{array}{c}-\frac{\partial\hat{\alpha}_p}{\partial t} \\ -{\frac{\partial\alpha_q}{\partial t}}\end{array}\!\!\!\right)&=\left(\!\!\begin{array}{cc}0 & (-1)^{r}  \mathbf{d}_\mathrm{i}\\  \mathbf{d} & 0\end{array}\!\!\right)\!\!\left(\!\!\begin{array}{c} {\frac{\partial \mathcal{H}}{\partial \hat{\alpha}_p}} \\\hat{\frac{\partial \mathcal{H}}{\partial \alpha_q}}\end{array}\!\!\right)+\!(-1)^{r}\!\!\left(\!\!\begin{array}{c}  \mathbf{d}_\mathrm{b} \\ 0 \end{array}\!\!\right) \hat{e}_b\,,\\
\begin{array}{c}~~\,\,f_b ~~\end{array}&= ~(-1)^{p}\frac{\partial \mathcal{H}}{\partial \hat{\alpha}_p}\bigg|_{\partial K}\,,
\end{split}
\end{equation}
where $r=pq+1$.

It immediately follows that $\frac{\mathrm{d}\mathcal{H}}{\mathrm{d}t}= \langle \hat{e}_b \wedge {f}_b, \partial K \rangle $, enunciating a fundamental property of the system: the increase in the energy on the domain $|K|$ is equal to the power supplied to the system through the boundary $\partial K$ and $\partial (\star K)$. Due to its structural properties, the system (\ref{eq-38d}) can be called a spatially-discrete time-continuous boundary control system with $\hat{e}_b$ being the boundary control input and $f_b$ being the output.

An alternative formulation of a spatially discrete port-Hamiltonian system is given in terms of the simplicial Dirac structure (\ref{eq:Dir-prim-dual2}). We start with the Hamiltonian function $
({\alpha}_p, \hat{\alpha}_q)\mapsto \mathcal{H}({\alpha}_p, \hat{\alpha}_q)$, where ${\alpha}_p \in \Omega_d^p(K)$ and $\hat{\alpha}_q\in \Omega_d^q(\star _\mathrm{i}K)$.
In a similar manner as in deriving (\ref{eq-38d}), we introduce the port-Hamiltonian system
\begin{equation}\label{eq-38dN}
\begin{split}
\!\!\!\!\left(\!\!\begin{array}{c}\!\!-\frac{\partial{\alpha}_p}{\partial t} \\ \!\!-{\frac{\partial \hat{\alpha}_q}{\partial t}}\!\!\!\end{array}\right)&\!=\!\left(\!\!\begin{array}{cc}0 & \!\!\!(-1)^{pq+1}  \mathbf{d}\\  \mathbf{d}_\mathrm{i} & 0\end{array}\!\!\right)\!\left(\!\!\begin{array}{c} \hat{{\frac{\partial \mathcal{H}}{\partial{\alpha}_p}}} \\ {\frac{\partial \mathcal{H}}{\partial \hat{\alpha}_q}}\end{array}\!\!\right)\!+\!\left(\!\!\begin{array}{c} 0\\  \mathbf{d}_\mathrm{b}  \end{array}\!\!\right) \hat{f}_b\,,\\
\begin{array}{c}~~\,\,e_b ~~\end{array}&= ~(-1)^{p}\frac{\partial \mathcal{H}}{\partial \hat{\alpha}_q}|_{\partial K}\,.
\end{split}
\end{equation}

In contrast to (\ref{eq-38d}), in the case of the formulation (\ref{eq-38dN}), the boundary flows $\hat{f}_b$ can be considered to be freely chosen, while the boundary efforts $e_b$ are determined by the dynamics.

\subsection{Modelling dissipation}
Incorporation of dissipation parallels the continuous case and for the present moment we shall consider only dissipation modeled by port termination. As an illustration, consider a mapping $\hat{R}_d: \Omega_d^q(K)\rightarrow \Omega_d^{n-q}(\star_\mathrm{i} K)$ that satisfies
$$
\langle \hat{R}_d(f_q)\wedge f_q,K\rangle \geq 0\,~~ \forall f_q\in \Omega_d^q(K)\,.
$$ 
Furthermore, let $\hat{R}_d=R *$ with $R$ being a positive real constant.

In case of the simplicial Dirac structure (\ref{eq:Dir-prim-dual}), introduce the relation
$$
\hat{e}_q=-(-1)^{q(n-q)}\hat{R}_d(f_q)=-(-1)^{q(n-q)}R * f_q\,,
$$
as well as associate to every primal $(n-p)$-cell an energy storage effort variable and to every dual $p$-cell a sign consistent energy flow leading to
\begin{equation*}%\label{eq-6Nat}
\begin{split}
\hat{f}_p=-\frac{\partial \hat{x}_p}{\partial t}\,,~~ e_p=\frac{\partial \mathcal{H} }{\partial \hat{x}}\,,~~\, \hat{x}\in \Omega_d^p(\star_\mathrm{i} K)\,,
 \end{split}
 \end{equation*}
with $ \mathcal{H} $ being a total stored energy.

This leads to relaxation dynamics of a diffusion process
\begin{equation*}%\label{eq-6Nat}
\begin{split}
\frac{\partial \hat{x}}{\partial t} =(-1)^{q-1} R \,\mathbf{d}_\mathrm{i}* \mathbf{d} \frac{\partial \mathcal{H} }{\partial \hat{x}}+(-1)^{pq} \mathbf{d}_\mathrm{b} \hat{e}_b\,,
 \end{split}
 \end{equation*}
with
\begin{equation*}%\label{eq:dis1}
\begin{split}
\frac{\mathrm{d}H}{\mathrm{d}t}&= \langle \frac{\partial \mathcal{H}}{\partial \hat{x}}\wedge \frac{\partial \hat{x}}{\partial t} , K \rangle\\
&= \langle \frac{\partial \mathcal{H}}{\partial \hat{x}}\wedge \left( (-1)^{q-1} R\, \mathbf{d}_\mathrm{i} * \mathbf{d} \frac{\partial \mathcal{H}}{\partial \hat{x}} +(-1)^{pq} \mathbf{d}_\mathrm{b} \hat{e}_b\right), K \rangle\\
&=-R\langle \mathbf{d}\frac{\partial \mathcal{H}}{\partial \hat{x}} \wedge *\mathbf{d} \frac{\partial H}{\partial \hat{x}},K\rangle+(-1)^p\langle \hat{e}_\mathrm{b} \wedge \frac{\partial \mathcal{H}}{\partial \hat{x}}, \partial K \rangle\\\
&\leq  \langle \hat{e}_b \wedge {f}_b, \partial K \rangle\,.
 \end{split}
 \end{equation*}

Let $\hat{f}_p=*f_p=-\frac{\partial x}{\partial t}$, $f_p,x\in \Omega_d^0(K)$, that is $p=n$ and $q=1$. As the stored energy take $\mathcal{H}=\frac{1}{2}\langle x\wedge *x, K\rangle$. Then
\begin{equation*}%\label{eq-6Nat}
\begin{split}
\frac{\partial x}{\partial t} &= R *\mathbf{d}_\mathrm{i}* \mathbf{d} x=-R \mbox{\boldmath$\delta$} \mathbf{d} x+(-1)^n\mathbf{d}_\mathrm{b} \hat{e}_b =-R\Delta x+(-1)^n\mathbf{d}_\mathrm{b} \hat{e}_b\,,
 \end{split}
 \end{equation*}
where $\Delta$ is the Laplace operator, $\Delta: \Omega_d^0(K)\rightarrow  \Omega_d^0(K)$. One needs to be careful here with the minus sign since by the chosen convention $\Delta x=-\mathrm{div\,grad}\, x$ \cite{Abraham}.
The boundary flow is $f_b=(-1)^n x|_{\partial K}$.

\begin{rmk}
Consider a diffusion process on a one-dimensional simplicial complex $K$ with a dual $\star K$ that is also a one-dimensional simplicial complex (the spatial domain is identical to the domain of the telegraph equations, confer to Figure~\ref{fig:K1}). This means that $n=q=1$ and $p=0$. The operator $\hat{R}_d$ is a positive definite operator that maps the set of primal edges into the set of dual nods. The resulting dissipative port-Hamiltonian system is
\begin{equation*}%\label{eq-6Nat}
\begin{split}
\frac{\partial \hat{x}}{\partial t}&= \mathbf{d}_\mathrm{i} \hat{R}_d(\mathbf{d} x)
 \end{split}
 \end{equation*}
for $\hat{e}_b=0$, which conduces to the standard compartmental model. This can be extended to structure-preserving discretization of reaction-diffusion systems as hinted in \cite{Seslija}.
\end{rmk}

\section{Matrix Representations for Linear Port-Hamiltonian Systems on a Simplicial Complex and Error Analysis}\label{Sec:matrixrep}
Discrete exterior calculus can be implemented using the formalism of linear algebra. All discrete $k$-forms can be stored into a vector with entries assuming the values that those forms take on the ordered set of $k$-simplices. The boundary operator is a linear mapping from the space of $k$-simplices to the space of $(k-1)$-simplices and can be represented by a sparse matrix containing only $\pm1$ elements, while the exterior derivative is its transpose. There is a number of different Hodge star implementations, but the so-called mass-lumped is the simplest, with the Hodge star being a diagonal matrix.

\subsection{Matrix representations of linear operators}
Any discrete differential form $\alpha^k\in \Omega_d^k(K)$ is uniquely characterized by its coefficient vector $\vec{\alpha}\in \Lambda^k$, where $\Lambda^k=\mathbb{R}^{N_k}$, $N_k=\mathrm{dim}\Omega_d^k(K)$ is the number of $k$-simplices. Similarly, for a $\hat{\beta}_\mathrm{i}\in \Omega_d^{n-k}(\star_\mathrm{i}K)$ the vector representation is $\vec{\beta}_i \in \Lambda_{n-k}=\mathbb{R}^{N_k}$. Representing discrete forms by their coefficient vectors induces a matrix representation for linear operators (see e.g. \cite{Desbrun,Hiptmair}).

The exterior derivative $\mathrm{d}: \Omega_d^k(K)\rightarrow \Omega_d^{k+1}(K)$ is represented by a matrix $D^k\in \mathbb{R}^{N_{k+1}\times N_{k}}$, which is the transpose of the incidence matrix of $k$-faces and $(k+1)$-faces of the primal mesh \cite{Desbrun,Desbrun2}. The discrete derivative $\mathbf{d}: \Omega_d^k(\star K)\rightarrow \Omega_d^{k+1}(\star_\mathrm{i} K)$ in the matrix notation is the transpose of the incidence matrix of the dual mesh denoted by $\hat{D}\in{\mathbb{R}}^{N_{k+1}\times N_{k}}$,which can be, as we shall soon show, decomposed as $\hat{D}=( D_{i} \,\vdots\, D_{b} )^\textsc{t}$ with $D_i$ and $D_b$ being matrix representations of $\mathrm{d}_i$ and $\mathrm{d}_b$, respectively.

The exterior product  $\wedge: \Omega_d^k(K)\times \Omega_d^{n-k}(\star_\mathrm{i} K) \rightarrow \Omega_d^{n}(V_k(K))$ between $\alpha\in \Omega_d^k(K)$ and $\hat{\beta}\in \Omega_d^{n-k}(\star_\mathrm{i} K)$ can be written as
\begin{equation*}
\begin{split}
\langle \alpha \wedge \hat{\beta}, K \rangle &=\vec{\alpha}^\textsc{t} W_k^{n-k} \vec{\beta} = (-1)^{k(n-k)}\vec{\beta}^\textsc{t} \widehat{W}_{n-k}^k \vec{\alpha}\\
&= (-1)^{k(n-k)}\langle \hat{\beta} \wedge \alpha, K \rangle\,,
\end{split}
\end{equation*}
where $W_k^{n-k}, \widehat{W}_{n-k}^k\in \mathbb{R}^{N_k\times N_k}$ and $W_k^{n-k}=(-1)^{k(n-k)}(\widehat{W}_{n-k}^k)^\textsc{t}$.

A crucial ingredient for supplying the result of Theorem~\ref{th:pdDirac} is a discrete summation by parts formula (\ref{eq:sumbypart}), which in the context of the simplicial Dirac structures can be rewritten as
\begin{equation*}\label{eq:sumbypartsF}
\begin{split}
\langle \mathrm{d} e_p \wedge \hat{e}_q, K\rangle +(-1)^{k-1} \langle e_p\wedge (\mathrm{d}_i \hat{e}_q + \mathrm{d}_b \hat{e}_b ), K \rangle = \langle e_p \wedge \hat{e}_b, \partial K \rangle\,,
\end{split}
\end{equation*}
where $e_p\in \Omega_d^{k-1}(K)$, $\hat{e}_q\in \Omega_d^{n-k}(\star_\mathrm{i} K)$, $\hat{e}_b\in \Omega_d^{n-k}(\star_\mathrm{b} K)$ for $q=k$ and $p=n-k+1$.

Representing the discrete forms by the corresponding coefficient vectors $\vec{e}_p\in\Lambda^{k}$, $\vec{e}_q\in \Lambda_{n-k}$,  $\vec{e}_b\in\Lambda_{b,{n-k}}$, $\vec{f}_p\in\Lambda_{n-k+1} $, $\vec{f}_q\in \Lambda^{k}$,  $\vec{f}_b\in\Lambda_{b}^{k-1}$ gives rise to the matrix representation of (\ref{eq:sumbypartsF})
\begin{equation*}
\begin{split}
\left( D^{k-1} \vec{e}_p\right)^\textsc{t} W_k^{n-k} \vec{e}_q + (-1)^{k-1} \vec{e}_p^\textsc{t} W_{k-1}^{n-k+1}\left( D_i^{n-k} \vec{e}_q + D_b^{n-k} \vec{e}_b\right) \\= \left(T^{k-1}\vec{e}_p\right)^{\textsc{t}} W_{b, k-1}^{n-k} \vec{e}_b\,.
\end{split}
\end{equation*}
Here the matrix $T^{k-1}\in \mathbb{R}^{\mathrm{dim}\, \Lambda_b^{k-1}\times N_{k-1}}$ is a trace operator of $(k-1)$-forms on the boundary of the primal simplicial complex $K$.

After regrouping we have
\begin{equation*}
\begin{split}
 \vec{e}_p^\textsc{t} \left( \left(D^{k-1}\right)^\textsc{t}  W_k^{n-k}   + (-1)^{k-1} W_{k-1}^{n-k+1} D_i^{n-k} \right)\vec{e}_q + (-1)^{k-1}  \vec{e}_p^\textsc{t} W_{k-1}^{n-k+1} D_b^{n-k} \vec{e}_b \\
 = \left(T^{k-1}\vec{e}_p\right)^{\textsc{t}} W_{b, k-1}^{n-k} \vec{e}_b\,.
\end{split}
\end{equation*}

Choosing $W_k^{n-k}=I_{N_k}$, $W_{k-1}^{n-k+1}=I_{N_{k-1}}$ and $W_{b, k-1}^{n-k}=I_{\mathrm{dim}\, \Lambda_b^{k-1}}$ implies the well-known relation \cite{Desbrun2}
\begin{equation*}
\begin{split}
	D_i^{n-k}&=(-1)^k \left( D^{k-1} \right)^\textsc{t}\,,
\end{split}
\end{equation*}
while interestingly enough the dual boundary operator is a dual of the primal trace operator
\begin{equation*}
\begin{split}
	D_b^{n-k}&=(-1)^{k-1}\left( T^{k-1}\right)^\textsc{t}\,.
\end{split}
\end{equation*}

The discrete Hodge operator $* : \Omega_d^k(K)\rightarrow \Omega_d^{n-k}(\star_i K)$ has the following matrix representation \cite{Desbrun,Desbrun2}
\begin{equation*}
M^k \vec{\alpha} = \vec{\beta} ~~\mathrm{with}~M^k\in \mathbb{R}^{N_k\times N_k} ~~\mathrm{for} ~~\vec{\alpha}\in \Lambda^k\,,\, \vec{\beta}\in \Lambda_{n-k}\,,
\end{equation*}
while the discrete Hodge star from $\Lambda_{n-k}$ to $\Lambda^k$ can be described by
\begin{equation*}
\widehat{M}^{n-k} \vec{\beta} =\widehat{M}^{n-k} M^k \vec{\alpha}\,,
\end{equation*}
where $\widehat{M}^{n-k} M^k = (-1)^{k(n-k)}I_{N_k}$. As in the continuous theory, the discrete Hodge operators are invertible. The norm of $\vec{\alpha}\in\Lambda^k$ induced by the discrete Hodge star is
\begin{equation*}
\| \vec{\alpha} \|_{\Lambda^k}^2=\vec{\alpha}^\textsc{t}M^k \vec{\alpha}=\| (M^k)^{\frac{1}{2}}\vec{\alpha}\|\,.
\end{equation*}

\subsection{Representation of simplicial Dirac structures}
Consider the simplicial Dirac structure (\ref{eq:Dir-prim-dual}). The effort and flow space are
\begin{equation*}
\begin{split}
\mathcal{F}_{p,q}^d\Lambda&=\Lambda_p \times \Lambda^q \times \Lambda_b^{n-p}\,,\\
\mathcal{E}_{p,q}^d\Lambda&=\Lambda^{n-p} \times \Lambda_{n-q} \times \Lambda_{b,{n-q}}\,,
\end{split}
\end{equation*}
and the bilinear form (\ref{eq:bild}) on the space $\mathcal{F}_{p,q}^d\Lambda\times \mathcal{E}_{p,q}^d\Lambda$ is
\begin{equation}\label{eq:bildmatrix}
\begin{split}
\langle\!\langle (&  \vec{f}_p^1,\vec{f}_q^1,\vec{f}_b^1 , \vec{e}_p^1,\vec{e}_q^1,\vec{e}_b^1), (\vec{f}_p^2,\vec{f}_q^2,\vec{f}_b^2,\vec{e}_p^2,\vec{e}_q^2,\vec{e}_b^2)\rangle\!\rangle_d\\
&=   \left(\vec{e}_p^1\right)^\textsc{t} {W}_{n-p}^p \vec{f}_p^2 + \left(\vec{e}_q^1\right)^\textsc{t} \widehat{W}_{n-q}^{q} \vec{f}_q^2+ \left(\vec{e}_p^2\right)^\textsc{t} W_{n-p}^p \vec{f}_p^1+\left(\vec{e}_q^2\right)^\textsc{t} \widehat{W}_{n-q}^q \vec{f}_q^1 \\
&~~~+  \left(\vec{e}_b^1\right)^\textsc{t} W_{b,{n-q}}^{{n-p}} \vec{f}_b^2+ \left(\vec{e}_b^2\right)^\textsc{t} W_{b,{n-q}}^{n-p} \vec{f}_b^1\\
&=\left(\vec{e}_p^1\right)^\textsc{t}  \vec{f}_p^2 + \left(\vec{e}_p^2\right)^\textsc{t}  \vec{f}_p^1+ (-1)^{q(n-q)}\left( \left(\vec{e}_q^1\right)^\textsc{t}  \vec{f}_q^2 +  \left(\vec{e}_q^2\right)^\textsc{t}  \vec{f}_q^1  \right)\\
&~~~+ (-1)^{(n-p)(n-q)} \left( \left(\vec{e}_b^1\right)^\textsc{t} \vec{f}_b^2+  \left(\vec{e}_b^2\right)^\textsc{t} \vec{f}_b^1  \right)\,,
  \end{split}
\end{equation}
where we took $W_{n-p}^{p}=I_{N_p}$, $\widehat{W}_{n-q}^q=(-1)^{q(n-q)}I_{N_q}$, $\widehat{W}_{b,{n-q}}^{n-p}=(-1)^{(n-p)(n-q)}I_{N^b_{n-p}}$.

The matrix representation of the simplicial Dirac structure (\ref{eq:Dir-prim-dual}) is
\begin{equation}\label{eq:DiracMatrixRep}
\begin{split}
\left(\begin{array}{c}\vec{f}_p \\ \vec{f}_q \\\vec{f}_b\end{array}\right)&= 
\left(\begin{array}{ccc}
0 & (-1)^{pq+1}D_i^{n-q} & (-1)^{pq+1}D_b^{n-q} \\
D^{n-p} & 0 & 0 \\
(-1)^p T^{n-p} & 0 & 0\end{array}\right)
\left(\begin{array}{c}\vec{e}_p \\ \vec{e}_q \\ \vec{e}_b\end{array}\right)\\
&=\left(\begin{array}{ccc}
0 & (-1)^{q(p+1)+1}D^{n-p} & (-1)^{q(n-1)}(T^{n-p})^\textsc{t} \\
D^{n-p} & 0 & 0 \\
(-1)^p T^{n-p} & 0 & 0\end{array}\right)
\left(\begin{array}{c}\vec{e}_p \\ \vec{e}_q \\ \vec{e}_b\end{array}\right)\,.
\end{split}
\end{equation}

\subsection{Error analysis}\label{Sec:convergence}
In this section we consider the spatial discretization of a linear distributed-parameter port-Hamiltonian system of the form
\begin{equation}\label{sec:lin-inf}
\begin{split}
-*_p\frac{\partial {{ e}}_p^{\mathrm{c}}}{\partial t}&=(-1)^{pq+1} \mathrm{d}e_q^{\mathrm{c}}\,\\
-*_q\frac{\partial {e}_q^{\mathrm{c}}}{\partial t}&=\mathrm{d}e_p^{\mathrm{c}}\\
e_q^{\mathrm{c}}|_{\partial|K|}&=e_b^{\mathrm{c}}\\
f_b^{\mathrm{c}}&=(-1)^p e_p^{\mathrm{c}}|_{\partial |K|}
\end{split}
\end{equation}
on an $n$-dimensional polytope $|K|$. The operators $*_p$ and $*_q$ are the Hodge stars spawned by Riemannian metrics. 

Note that all continuous (spatially undiscretized) quantities are labeled by a superscript c, for example, $e_p^{\mathrm{c}}$ and $e_q^{\mathrm{c}}$ are the continuous efforts. The approach to convergence analysis we take here is that of \cite{Hiptmair1}.

The discrete analogue of (\ref{sec:lin-inf}) defined with respect to the simplicial Dirac structure (\ref{eq:Dir-prim-dual}) is
\begin{equation}\label{sec:findimsystem}
\begin{split}
-{\widehat{M}}_p\dot{\vec{e}}_p&=(-1)^{pq+1}\left( D_i^{n-q} \vec{e}_q + D_b^{n-q} \vec{e}_b \right)\,\\
-{M}_q  \dot{\vec{e}}_q &=D^{n-p} \vec{e}_p\\
\vec{f}_b&=(-1)^p T^{n-p}\vec{e}_p\,,
\end{split}
\end{equation}
where $\widehat{M}_p\in \mathbb{R}^{N_p\times N_p}$ and $M_q\in \mathbb{R}^{N_q\times N_q}$ are diagonal Hodge matrices. A dot over a variable denotes the time derivative.

Integrate the first equation over dual $p$-cells and the second over primal $q$-faces to obtain
\begin{equation}\label{sec:findimsystem2}
\begin{split}
-\widetilde{M}_p\dot{\vec{e}}^*_p-\dot{\vec{R}}_p&=(-1)^{pq+1}\left( D_i^{n-q} \vec{e}^*_q + D_b^{n-q} \vec{e}^*_b \right)\,\\
-{M}_q \dot{\vec{e}}^*_q-\dot{\vec{R}}_q&=D^{n-p} \vec{e}^*_p\\
\vec{f}^*_b&=(-1)^p T^{n-p}\vec{e}^*_p\,,
\end{split}
\end{equation}
where $\vec{e}_p^*$ and $\vec{e}_q^*$ are integral forms on the primal mesh and its circumcentric dual, while $\dot{\vec{R}}_p$ and $\dot{\vec{R}}_q$ are time derivatives of the residues of the Hodge operator approximations given by
\begin{equation*}
\begin{split}
\int_{\hat{\sigma}_k^p} *_p \dot{e}_p^{\mathrm{c}}&= \widehat{M}_{p,k}\dot{\vec{e}}_{p,k}+\dot{\vec{R}}_{p,k}\\
\int_{{\sigma}_l^q} *_q \dot{e}_q^{\mathrm{c}}&= {M}_{q,l}\dot{\vec{e}}_{q,l}+\dot{\vec{R}}_{q,l}\,,
\end{split}
\end{equation*}
with subscripts $k$ and $l$ acting as selectors for vector components.

Define discrete energy errors as $\delta \vec{e}_p=\vec{e}_p^*-\vec{e}_p$ and $\delta \vec{e}_q=\vec{e}_q^*-\vec{e}_q$, and the output error as $\delta \vec{f}_b=\vec{f}_b^*-\vec{f}_b$.

Subtracting (\ref{sec:findimsystem}) from (\ref{sec:findimsystem2}) leads to
\begin{equation}\label{sec:erroran3}
\begin{split}
-{\widehat{M}}_p \delta\dot{\vec{e}}_p-\dot{\vec{R}}_p&=(-1)^{pq+1}\left( D_i^{n-q} \delta \vec{e}_q + D_b^{n-q} \delta\vec{e}^*_b \right)\,\\
-{M}_q \delta \dot{\vec{e}}_q-\dot{\vec{R}}_q&=D^{n-p} \delta \vec{e}_p\\
\delta\vec{f}_b&=(-1)^p T^{n-p}\delta \vec{e}_p\,,
\end{split}
\end{equation}
since $\delta \vec{e}_b= \vec{e}_b^*-\vec{e}_b=(\int_{\star \sigma_b^{n-q}} e_b^{\mathrm{c}}-\langle e_b, \star \sigma_b^{n-q}\rangle)_{\star \sigma_b^{n-q}\in\partial(\star K)}=0$.

Multiplying the first equation in (\ref{sec:erroran3}) by $\delta \vec{e}_p$ and the second by $\delta \vec{e}_q$ gives
\begin{equation*}
\begin{split}
-\langle \delta \vec{e}_p, {\widehat{M}}_p \delta \dot{\vec{e}}_p\rangle -\langle \delta \vec{e}_p,\dot{\vec{R}}_p\rangle&=(-1)^{pq+1}\langle  \delta \vec{e}_p, D_i^{n-q} \delta \vec{e}_q\rangle\,\\
-\langle \delta \vec{e}_q, {M}_q \delta \dot{\vec{e}}_q\rangle -\langle \delta \vec{e}_q,\dot{\vec{R}}_q \rangle &=\langle \delta \vec{e}_q, D^{n-p} \delta \vec{e}_p\rangle\,.
\end{split}
\end{equation*}
Then we have
\begin{equation*}
\begin{split}
-\langle\delta \vec{e}_p, {\widehat{M}}_p \delta \dot{\vec{e}}_p\rangle -\langle  \delta \vec{e}_p, \dot{\vec{R}}_p\rangle
-\langle \delta \vec{e}_q, {M}_q \delta \dot{\vec{e}}_q\rangle -\langle\delta \vec{e}_q, \dot{\vec{R}}_q\rangle \\
=(-1)^{pq+1}\langle  \delta \vec{e}_p, D_i^{n-q} \delta \vec{e}_q\rangle +\langle \delta \vec{e}_q, D^{n-p} \delta \vec{e}_p\rangle  =0\,.
\end{split}
\end{equation*}
That is
\begin{equation}\label{sec:lin-residues}
\langle \delta \vec{e}_p, {\widehat{M}}_p \delta \dot{\vec{e}}_p\rangle
+\langle \delta \vec{e}_q, {M}_q \delta \dot{\vec{e}}_q\rangle  =-\langle \delta \vec{e}_p,\dot{\vec{R}}_p\rangle-\langle\delta \vec{e}_q \dot{\vec{R}}_q\rangle\,.
\end{equation}

Integration of (\ref{sec:lin-residues}) from $0$ to $t_f$ yields
\begin{equation*}
\begin{split}
\frac{1}{2}\| {\widehat{M}}_p^{\frac{1}{2}} \delta{\vec{e}}_p({t_f})\|^2 +\frac{1}{2} \| {M}_q^{\frac{1}{2}} \delta \vec{e}_q({t_f})\|^2  &=-\int_0^{{t_f}}\langle \delta \vec{e}_p(\tau) , \dot{\vec{R}}_p(\tau)\rangle+\langle\delta \vec{e}_q(\tau), \dot{\vec{R}}_q(\tau)\rangle d\tau\\
&\leq \int_0^{t_f} \| \dot{\vec{R}}_p(\tau)\| \|\delta \vec{e}_p(\tau) \|+ \| \dot{\vec{R}}_q(\tau)\| \|\delta \vec{e}_q(\tau)\| d\tau\,.
\end{split}
\end{equation*}

Let $t^*$ be such that
\begin{equation*}
\begin{split}
\| {\widehat{M}}_p^{\frac{1}{2}} \delta{\vec{e}}_p(t^*)\|^2 + \| {M}_q^{\frac{1}{2}} \delta \vec{e}_q(t^*)\|^2 =\max_{0\leq t\leq t_f} \| \widehat{M}^{\frac{1}{2}}_p \| \,\|\delta \vec{e}_p (t)\| + || M_q^{\frac{1}{2}} \| \| \delta \vec{e}_q \|\,,
\end{split}
\end{equation*}
then
\begin{equation*}
\begin{split}
&\left( \| \widehat{M}_p^{\frac{1}{2}} \delta{\vec{e}}_p(t^*)\| + \| {M}_q^{\frac{1}{2}} \delta \vec{e}_q(t^*)\| \right )^2 \leq 2 \left( \| {\widehat{M}}^{\frac{1}{2}}_p \delta \vec{e}_p (t^*)\|^2 + \| M_q^{\frac{1}{2}} \delta \vec{e}_q(t^*) \|^2 \right) \\
&~~\leq 4 \int_0^{t_f} \| \dot{\vec{R}}_p(\tau)\| \|\delta \vec{e}_p(\tau) \|+ \| \dot{\vec{R}}_q(\tau)\| \|\delta \vec{e}_q(\tau)\| d\tau\\
&~~\leq 4 \int_0^{t_f} \left(  \| {\widehat{M}}^{\frac{1}{2}}_p \delta \vec{e}_p (t)\|+  \| {M}^{\frac{1}{2}}_q \delta \vec{e}_q (t)\|\right) \left( \| {\widehat{M}}^{-\frac{1}{2}}_p \dot{\vec{R}}_p (\tau)\|+  \| {M}^{-\frac{1}{2}}_q  \dot{\vec{R}}_q(\tau)\| \right) d \tau\;.
\end{split}
\end{equation*}
It follows that
\begin{equation*}
\begin{split}
 \| \widehat{M}_p^{\frac{1}{2}} \delta{\vec{e}}_p(t^*)\| + \| {M}_q^{\frac{1}{2}} \delta \vec{e}_q(t^*)\| &\leq 4 \int_0^{t_f} \| {\widehat{M}}^{-\frac{1}{2}}_p \dot{\vec{R}}_p (\tau)\|+  \| {M}^{-\frac{1}{2}}_q  \dot{\vec{R}}_q(\tau)\|   d \tau\,.
\end{split}
\end{equation*}
Thus
\begin{equation*}
\begin{split}
 \| \delta{\vec{e}}_p(t^*)\| + \|   \delta \vec{e}_q(t^*)\| &\leq 4 \left( \| \widehat{M}_p^{-\frac{1}{2}}\|_\infty + \| M_q^{-\frac{1}{2}} \|_\infty \right) \int_0^{t_f} \| {\widehat{M}}^{-\frac{1}{2}}_p \dot{\vec{R}}_p (\tau)\|+  \| {M}^{-\frac{1}{2}}_q  \dot{\vec{R}}_q(\tau)\|   d \tau\,.
\end{split}
\end{equation*}

Estimation of the residues $\vec{R}_p$ and $\vec{R}_q$ can be conducted by employing Bramble-Hilbert techniques in the case of a weak formulation, or using a Taylor's expansion of the efforts under the standard smoothness assumptions \cite{Hiptmair}. For the results on the estimates of the Hodge star in one, two and three dimension the reader is invited to consult \cite{Hiptmair} and references therein.

\section{Physical Examples}\label{Sec7}
In this section we formulate discrete analogues of distributed-parameter port-Hamiltonian systems on a three-, two-, and one-dimensional manifold.
\subsection{Maxwell's equations}
Let $K$ be a well-centered $3$-dimensional manifold-like simplicial complex with circumcentric dual $\star K$, endowed with a discrete Riemannian metric. Mirroring the continuous case \cite{vdSM02}, we formulate the discrete Maxwell's equations in terms of discrete differential forms, and then we demonstrate that the underpinning differential/gauge structure is preserved.

The energy variables are chosen such that they live on the discrete manifolds that are dual to one another. For instance, we choose the magnetic (field) induction $2$-form to be defined on the primal simplicial complex $K$ as $\alpha_q=B\in \Omega_d^2(K)$ and the electric induction $2$-form $\hat{\alpha}_p=\hat{D}\in \Omega_d^2(\star_\mathrm{i} K)$. This means that $B$ and $\hat{D}$ do not reside at the same discrete locations, but rather at separate faces of staggered lattices.

\begin{rmk}In the case of a spatio-temporal discretization based on the asynchronous variational integrator scheme, as proposed in \cite{Stern}, the electric and magnetic induction are also defined at different time locations leading to improved numeric performance (for more details refer to \cite{Stern}).
\end{rmk}

The coenergy variables are chosen coherently as implied by the choice of the energy variables such that the discrete Maxwell's equations fit the simplicial Dirac structure (\ref{eq:Dir-prim-dual}) for $n=3$, $p=q=2$. This entails that the magnetic field intensity $\hat{e}_q=\hat{H}\in \Omega_d^1(\star_\mathrm{i} K)$ and the electric intensity $e_p=E\in \Omega_d^1(K)$, as such, are related to the energy variables via
\begin{equation*}
\begin{split}
\hat{D}&=* \epsilon E\\
B&=* \mu \hat{H}\,,
\end{split}
\end{equation*}
where $\epsilon$ and $\mu$ denote the constant electric and magnetic permittivity, respectively.

The corresponding simplicial Dirac structure is 
\begin{equation}\label{eq:Dir-prim-dualME}
\begin{split}
\left(\begin{array}{c}\hat{f}_p \\ {f}_q\end{array}\right)&=\left(\begin{array}{cc}0 & -\mathbf{d}_\mathrm{i}\\  \mathbf{d} & 0\end{array}\right)\left(\begin{array}{c} {e}_p \\\hat{e}_q\end{array}\right)-\left(\begin{array}{c} \mathbf{d}_\mathrm{b} \\ 0\end{array}\right) \hat{e}_b\,\\
\begin{array}{c}~~\,\,f_b ~~\end{array}&= ~ \,e_p|_{\partial K}\,.
 \end{split}
\end{equation}
The Hamiltonian is $
\mathcal{H}=\frac{1}{2} \langle  E\wedge \hat{D}+ \hat{H}\wedge B,K\rangle\,,
$ or expressed only in terms of the primal forms as $\mathcal{H}=\frac{1}{2}\langle E \wedge * \epsilon E + \frac{1}{\mu} * B\wedge B, K \rangle$.

Under the assumption that there is no current in the medium, the spatially discretized Maxwell's equations with respect to the simplicial Dirac structure (\ref{eq:Dir-prim-dualME}) in the port-Hamiltonian form are given by
\begin{equation}\label{eq:Dir-prim-dualME1}
\begin{split}
\!\!\left(\!\!\begin{array}{c}- \frac{\partial \hat{D}}{\partial t} \\ -\frac{\partial B}{\partial t}\end{array}\right)&=\left(\begin{array}{cc}0 & \!-\mathbf{d}_\mathrm{i}\\  \mathbf{d} & 0\end{array}\!\!\right)\!\left(\!\begin{array}{c} \frac{\partial \mathcal{H}}{\partial \hat{D}} \\ \hat{\frac{\partial \mathcal{H}}{\partial B }}\end{array}\!\right)-\left(\begin{array}{c} \mathbf{d}_\mathrm{b} \\ 0\end{array}\right) \hat{e}_b\,\\
\begin{array}{c}~~\,\,f_b ~~\end{array}&= ~ \,\frac{\partial \mathcal{H}}{\partial \hat{D}}\bigg|_{\partial K}\,.
 \end{split}
\end{equation}
The readily proved energy balance is $\frac{\mathrm{d}\mathcal{H}}{\mathrm{d}t}= \langle \hat{e}_b \wedge{f}_b, \partial K \rangle $. Incorporating a nonzero current density into the discrete Maxwell's equations is straightforward as in the continuous case.

\subsection{Two-dimensional wave equation}
In order to demonstrate practically that we do not face a problem of interconnection of the elementary Dirac structures encountered in the mixed finite element method, as reported by \cite{Voss} (see pages 183--196), we consider the simplicial Dirac structure behind the discretized two-dimensional wave equation. The normalized wave equation is given by
$$\frac{\partial^2 \phi}{\partial t^2} - \Delta \phi=0\,,
$$
where $\phi$ is a smooth $0$-form on a compact surface $M\subset \mathbb{R}^2$ with a closed boundary, and $\Delta$ is the Laplace operator. This equation, together with nonzero energy flow, can be formulated as a port-Hamiltonian system with boundary port variables \cite{Talasila,Golo}.

The energy variables of the discretized system are chosen as follows: the kinetic momentum is a dual $2$-form whose time derivative is set to be $\hat{f}_p$, the elastic strain is a primal $1$-form with time derivative corresponding to $f_q$, the coenergy variables are a primal $0$-form $e_p$ and a dual $1$-form $\hat{e}_q$. Such a formulation of the discrete wave equation is consonant with the simplicial Dirac structure (\ref{eq:Dir-prim-dual}) for the case when $p=n=2$ and $q=1$. We shall, nevertheless, practically confirm the arguments of Theorem~\ref{th:pdDirac} in a simple low-dimensional model.

Consider a ring of counterclockwise oriented triangles that could be, say, obtained by a very coarse discretization of a disk. The dual of the central vertex $v_0$ is its Voronoi region, while the duals of the boundary vertices are the convex boundary pentagons. The orientation of the primal edges is chosen as indicated in Figure~\ref{fig:pentagon}. The orientation of the dual edges is induced such that the basis of the primal and dual cells combined give the orientation of the embedding space that, in our case, has been given by the right-hand rule (for more on orientation see pages 11--22 of \cite{Hirani}).
\begin{figure}
\centering
\vspace{-0.4cm}
      \hspace{0cm}\includegraphics[width=8.5cm]{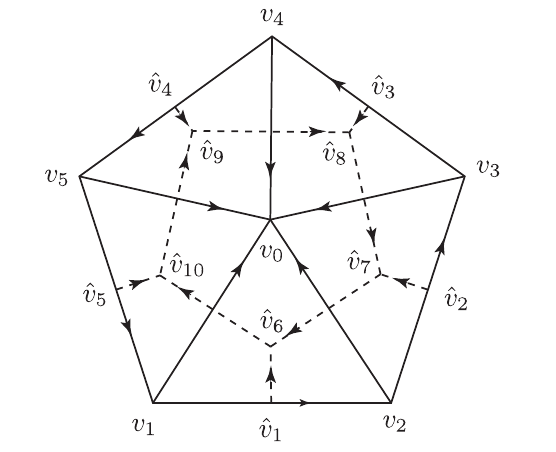}%SlikaPentagonN.pdf}
      \vspace{-0.2cm}
  \caption{A simplicial complex $K$ consists of five triangles arranged into a pentagon. The dual edges introduced by subdivision are shown dotted.}\label{fig:pentagon}
\end{figure}

It suffices to check the power conserving property of the founding Dirac structure. We need to show that
\begin{equation*}%\label{eq-6Nat}
\begin{split}
\langle e_p \wedge \hat{f}_p + \hat{e}_q \wedge f_q, K \rangle +\langle \hat{e}_b \wedge {f}_b, \partial K \rangle=0 \,.
 \end{split}
\end{equation*}
This is equivalent to the validity of the following relation
\begin{equation*}%\label{eq-6Nat}
\begin{split}
\langle \mathbf{d} e_p \wedge \hat{e}_q + e_p \wedge (\mathbf{d}_\mathrm{i} \hat{e}_q + \mathbf{d}_\mathrm{b} \hat{e}_b), K \rangle = \langle e_p \wedge \hat{e}_b, \partial K \rangle \,.
 \end{split}
\end{equation*}

We calculate
\begin{equation*}\label{eq-example1}
\begin{split}
\langle \mathbf{d} e_p \wedge \hat{e}_ q ,K \rangle &= \sum_{\sigma^1\in K}\langle \mathbf{d} e_p   ,\sigma^1\rangle \langle \hat{e}_q, \star \sigma^1\rangle= \sum_{\sigma^1\in K}\langle  e_p   ,\partial \sigma^1\rangle \langle \hat{e}_q, \star \sigma^1\rangle=\sum_{{\substack{\sigma^1\in K\\ \sigma^0\prec \sigma^1}}}\langle  e_p   ,\sigma^1\rangle \langle \hat{e}_q, \star \sigma^1\rangle\\
&= \left(  e_p(v_2)-e_p(v_1)\right) \hat{e}_q([\hat{v}_1,\hat{v}_6])+\left(  e_p(v_3)-e_p(v_2)\right) \hat{e}_q([\hat{v}_2,\hat{v}_7])\\
&~~~+\left(  e_p(v_4) - e_p(v_3)\right) \hat{e}_q([\hat{v}_3,\hat{v}_8])+\left(  e_p(v_5)-e_p(v_4)\right) \hat{e}_q([\hat{v}_4,\hat{v}_9])\\
&~~~+\left(  e_p(v_1) - e_p(v_5)\right) \hat{e}_q([\hat{v}_5,\hat{v}_{10}])+\left(  e_p(v_0)-e_p(v_1)\right) \hat{e}_q([\hat{v}_6,\hat{v}_{10}])\\
&~~~+\left(  e_p(v_0) - e_p(v_2)\right) \hat{e}_q([\hat{v}_7,\hat{v}_6])+\left(  e_p(v_0)-e_p(v_3)\right) \hat{e}_q([\hat{v}_8,\hat{v}_7])\\
&~~~+\left(  e_p(v_0) - e_p(v_4)\right) \hat{e}_q([\hat{v}_9,\hat{v}_8])+\left(  e_p(v_0)-e_p(v_5)\right) \hat{e}_q([\hat{v}_{10},\hat{v}_9])\,
 \end{split}
\end{equation*}
and
\begin{equation*}\label{eq-example2}
\begin{split}
\langle  e_p\wedge (\mathbf{d}_\mathrm{i}\hat{e}_q+\mathbf{d}_\mathrm{b}\hat{e}_b), K \rangle & =\sum_{\star \sigma^0\in \star K} \langle e_p,\sigma^0\rangle \langle \mathbf{d}_\mathrm{i} \hat{e}_q+\mathbf{d}_\mathrm{b} \hat{e}_b, \star \sigma^0\rangle\\
&=\sum_{\substack{\star \sigma^0\in \star K}} \langle e_p, \sigma^0\rangle\left( \langle \hat{e}_q, \partial_\mathrm{i} (\star \sigma^0)\rangle + \langle \hat{e}_b, \partial_\mathrm{b} (\star \sigma^0)\rangle\right)\\
&= e_p(v_1) \Big ( \hat{e}_q([\hat{v}_1,\hat{v}_6])+ \hat{e}_q([\hat{v}_6,\hat{v}_{10}])- \hat{e}_q([\hat{v}_5,\hat{v}_{10}])+\hat{e}_b([\hat{v}_5,\hat{v}_1]) \Big)\\
&~~+ e_p(v_2) \Big ( \hat{e}_q([\hat{v}_2,\hat{v}_7])+ \hat{e}_q([\hat{v}_7,\hat{v}_{6}])- \hat{e}_q([\hat{v}_1,\hat{v}_{6}])+\hat{e}_b([\hat{v}_1,\hat{v}_2]) \Big)\\
&~~+ e_p(v_3) \Big( \hat{e}_q([\hat{v}_3,\hat{v}_8])+ \hat{e}_q([\hat{v}_8,\hat{v}_{7}])- \hat{e}_q([\hat{v}_2,\hat{v}_{7}])+\hat{e}_b([\hat{v}_2,\hat{v}_3]) \Big)\\
&~~+ e_p(v_4) \Big( \hat{e}_q([\hat{v}_4,\hat{v}_9])+ \hat{e}_q([\hat{v}_9,\hat{v}_{8}])- \hat{e}_q([\hat{v}_3,\hat{v}_{8}])+\hat{e}_b([\hat{v}_3,\hat{v}_4]) \Big)\\
&~~+ e_p(v_5) \Big( \hat{e}_q([\hat{v}_5,\hat{v}_{10}])+ \hat{e}_q([\hat{v}_{10},\hat{v}_{9}])- \hat{e}_q([\hat{v}_4,\hat{v}_{9}])+\hat{e}_b([\hat{v}_4,\hat{v}_5]) \Big)\\
&~~+ e_p(v_0) \Big( -\hat{e}_q([\hat{v}_7,\hat{v}_6])- \hat{e}_q([\hat{v}_8,\hat{v}_{7}])- \hat{e}_q([\hat{v}_9,\hat{v}_{8}])-\hat{e}_q([\hat{v}_{10},\hat{v}_9])\\
&~~~~~~~~~~~~~~~ -\hat{e}_q([\hat{v}_{6},\hat{v}_{10}]) \Big)\,.
 \end{split}
\end{equation*}

After summation of the last two relations, all terms, except those associated with the primal and dual boundary, cancel out, leading to
\begin{equation}
\begin{split}
\langle \mathbf{d} e_p \wedge \hat{e}_ q ,K \rangle + \langle  e_p \wedge ( \mathbf{d}_\mathrm{i} \hat{e}_ q +\mathbf{d}_b \hat{e}_ b ),K \rangle &= e_p(v_1)\hat{e}_b([\hat{v}_5,\hat{v}_1]) + e_p(v_2)\hat{e}_b([\hat{v}_1,\hat{v}_2])\\  
&~~~+ e_p(v_3)\hat{e}_b([\hat{v}_2,\hat{v}_3]) +e_p(v_4)\hat{e}_b([\hat{v}_3,\hat{v}_4])  \\  
&~~~+ e_p(v_5)\hat{e}_b([\hat{v}_4,\hat{v}_5])\,.
 \end{split}
\end{equation}
This confirms that the boundary terms genuinely live on the boundary of $|K|$.

\subsection{Telegraph equations}
We consider an ideal lossless transmission line on a $1$-dimensional simplicial complex. The energy variables are the charge density ${q}\in \Omega_d^1(K)$, and the flux density $\hat{\phi}\in\Omega_d^1(\star  K)$, hence $p=q=1$. The Hamiltonian representing the total energy stored in the transmission line with discrete distributed capacitance $C$ and discrete distributed inductance $L$ is
\begin{equation*}%\label{eq-6Nat}
\begin{split}
\mathcal{H}=  \langle \frac{1}{2C} {q} \wedge * {q}  +  \frac{1}{2L}  \hat{\phi} \wedge * \hat{\phi}  ,K\rangle \,,
 \end{split}
\end{equation*}
with co-energy variables: $\hat{e}_p=\hat{\frac{\partial \mathcal{H}}{\partial{q}}}=*\frac{{q}}{C}=\hat{V}$ representing voltages and ${e}_q=\frac{\partial \mathcal{H}}{\partial \hat{\phi }}=*\frac{\hat{\phi}}{L}=I$ currents.

Selecting ${f}_p=-\frac{\partial{q}}{\partial  t}$ and $\hat{f}_q=-\frac{\partial \hat{\phi}}{\partial  t}$ leads to the port-Hamiltonian formulation of the telegraph equations
\begin{equation*}\label{eq:Dir-prim-dualMET}
\begin{split}
\left(\begin{array}{c} - \frac{\partial {q} }{\partial t} \\ -\frac{\partial \hat{\phi} }{\partial t}\end{array}\right)&=\left(\begin{array}{cc} 0 & \mathbf{d} \\  \mathbf{d}_\mathrm{i} & 0\end{array}\right) \left(\begin{array}{c} *\frac{q}{C} \\  *\frac{ \hat{\phi} }{L}  \end{array}\right)+\left(\begin{array}{c} 0 \\  \mathbf{d}_\mathrm{b}  \end{array}\right) \hat{f}_b\,\\
\begin{array}{c}~~\,\,e_b ~~\end{array} &= ~ - * \frac{\hat{\phi}}{L} \bigg|_{\partial K}\,.
 \end{split}
\end{equation*}
In the case we wanted to have the electrical current as the input, the charge and the flux density would be defined on the dual mesh and the primal mesh, respectively. Instead of the port-Hamiltonian system in the form (\ref{eq-38d}), that is (\ref{eq:Dir-prim-dualMET}), the discretized telegraph equations would be in the form (\ref{eq-38d}). The free boundary variable is always defined on the boundary of the dual cell complex. 

Note that the structure (\ref{eq:Dir-prim-dualMET}) is in fact a Poisson structure on the state space $\Omega_d^1( K)\times \Omega_d^1( \star K)$. This will become obvious when we present this structure in a matrix representation. Before that, it is illustrative to demonstrate how the pairings between primal and dual forms can be rather easily calculated.

\begin{figure}%[!h]
\centering
    \includegraphics[width=9cm]{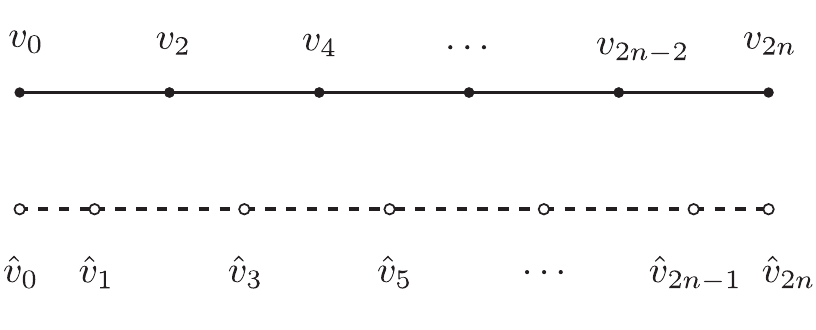}\\ \vspace{-0.3cm}
  \caption{The primal $1$-dimensional simplicial complex $K$ with even nodes indices and its dual $\star K$ with odd indices, both with conventional orientation of one simplices (from the node with a lower-index to the higher-index node). By construction, the nodes $\hat{v}_0$ and $\hat{v}_{2n}$ are added to the boundary as previously explained to insure that the boundary of the dual is the dual of the boundary, i.e., $\partial (\star K)=\star (\partial K)$. }{\label{fig:K1}}
\end{figure}

Using the notation from Figure~\ref{fig:K1}, we have
\begin{equation*}%\label{eq-6Nat}
\begin{split}
\langle \mathbf{d} e_q\wedge \hat{e}_p, K \rangle & =\sum_{\sigma^1\in K} \langle \mathbf{d}e_q,\sigma^1\rangle \langle \hat{e}_p, \star \sigma^1\rangle=\sum_{\substack{\sigma^1\in K}} \langle e_q,\partial \sigma^1\rangle \langle \hat{e}_p, \star \sigma^1\rangle\\
&=[e_q(v_2)-e_q(v_0)]\hat{e}_p(\hat{v}_1)+[e_q(v_4)-e_q(v_2)]\hat{e}_p(\hat{v}_3)+\ldots\\
&\,~~~+[e_q(v_{2n-2})-e_q(v_{2n-4})]\hat{e}_p(\hat{v}_{2n-3})\\
&\,~~~+[e_q(v_{2n})-e_q(v_{2n-2})]\hat{e}_p(\hat{v}_{2n-1})\\
%&+[e_p(v_{2n})-e_p(v_{2n-2})]\hat{e}_q(\hat{v}_{2n-1})\\
&=-e_q(v_0)\hat{e}_p(\hat{v}_1)-e_q(v_2) [\hat{e}_p(\hat{v}_3)-\hat{e}_p(\hat{v}_1)]\\
&\,~~~-e_q(v_4) [\hat{e}_p(\hat{v}_5)-\hat{e}_p(\hat{v}_3)]~~~-\ldots \\
&\,~~~- e_q(v_{2n-2}) [\hat{e}_p(\hat{v}_{2n-1})-\hat{e}_p(\hat{v}_{2n-3})]+e_q(v_{2n}) \hat{e}_p(\hat{v}_{2n-1})\,
 \end{split}
\end{equation*}
and
\begin{equation*}%\label{eq-6Nat}
\begin{split}
\langle  e_q\wedge (\mathbf{d}_\mathrm{i}\hat{e}_p+\mathbf{d}_\mathrm{b}\hat{f}_b), K \rangle 
& =\sum_{\star \sigma^0\in \star K} \langle e_q,\sigma^0\rangle \langle \mathbf{d}_\mathrm{i} \hat{e}_p+\mathbf{d}_\mathrm{b} \hat{f}_b, \star \sigma^0\rangle\\
&=\sum_{\substack{\star \sigma^0\in \star K}} \langle e_q, \sigma^0\rangle\left( \langle \hat{e}_p, \partial_\mathrm{i} (\star \sigma^0)\rangle + \langle \hat{f}_b, \partial_\mathrm{b} (\star \sigma^0)\rangle\right)\\
&=e_q({v}_0)[\hat{e}_p(\hat{v}_1)-\hat{f}_b(\hat{v}_0)]+e_q({v}_2) [\hat{e}_p(\hat{v}_3)-\hat{e}_p(\hat{v}_1)]\\
&\,~~~+e_q(v_{4}) [\hat{e}_p(\hat{v}_{5})-\hat{e}_p(\hat{v}_{3})]
+\ldots\\
&\,~~~+ e_q(v_{2n-2}) [\hat{e}_p(\hat{v}_{2n-1})-\hat{e}_p(\hat{v}_{2n-3})] \\
&\,~~~+{e}_q(v_{2n}) [\hat{f}_b(\hat{v}_{2n})-\hat{e}_p(v_{2n-1})]\,.
 \end{split}
\end{equation*}

The arguments of the discrete Stokes theorem (\ref{eq:sumbypart}) are trivially verified
\begin{equation*}%\label{eq-6Nat}
\begin{split}
\langle \mathbf{d} e_q\wedge \hat{e}_p, K \rangle +\langle  e_q\wedge (\mathbf{d}_\mathrm{i}\hat{e}_p+\mathbf{d}_\mathrm{b}\hat{f}_b), K \rangle &=\langle e_q\wedge \hat{f}_b, \partial K \rangle\\&= -e_q(v_0)\hat{f}_b(\hat{v}_0)+e_q(v_{2n})\hat{f}_b(\hat{v}_{2n})\,
  \end{split}
\end{equation*}
showing the power-preserving property of the simplicial Dirac structure (\ref{eq:Dir-prim-dualMET}) which implies that for any $(\hat{e}_p,{f}_p,{e}_q,\hat{f}_q,{e}_b,{\hat{f}}_b)$ in the simplicial structure (\ref{eq:Dir-prim-dualMET}) the following holds
\begin{equation*}%\label{eq-6Nat}
\begin{split}
\langle  \hat{e}_p\wedge  {f}_p, K \rangle +\langle   {e}_q\wedge \hat{f}_q, K \rangle+  \langle  {e}_b \wedge {\hat{f}}_b, \partial K \rangle=0\,.
  \end{split}
\end{equation*}
The energy balance for the transmission line thus is
\begin{equation}%\label{eq-6Nat}
\begin{split}
\frac{\mathrm{d}\mathcal{H}}{\mathrm{d}t}= \langle  {e}_b \wedge \hat{f}_b, \partial K \rangle=  e_b(v_{2n}){\hat{f}}_b(\hat{v}_{2n})-e_b(v_0) \hat{f}_b(\hat{v}_{0}) \,,
 \end{split}
\end{equation}
which demonstrates that the boundary objects genuinely live on the boundary $\partial K$.

\vspace{0.15cm}
\noindent{\textbf{Matrix representation.}} A differential form $e_q\in \Omega_d^0(K)$ is uniquely characterized by its coefficient vector $\vec{e}_q\in{\mathbb{R}}^{n+1}$ since $\mathrm{dim}\, \Omega_d^0(K)=n+1$, similarly $\vec{e}_p,\vec{f}_p\in \mathbb{R}^n$, $\vec{f}_q\in \mathbb{R}^{n+1}$, $\vec{e}_b,\vec{f}_b\in \mathbb{R}^2$. The exterior derivative $\mathbf{d}: \Omega_d^0(K)\rightarrow \Omega_d^1(K)$ is represented by a matrix $D\in \mathbb{R}^{n\times(n+1)}$, which is the transpose of the incidence matrix of the primal mesh \cite{Desbrun,Desbrun2}. The discrete derivative $\mathbf{d}_\mathrm{i}: \Omega_d^0(\star_i K)\rightarrow \Omega_d^1(\star K)$ in the matrix notation is $D_{i}=-D^\textsc{t}$, and $\mathbf{d}_\mathrm{b}: \Omega_d^0(\star_b K)\rightarrow \Omega_d^1(\star K)$ is represented by $D_{b}$, which is the transpose of the trace operator. For the simplicial complex in Figure~\ref{fig:K1}, we have
\begin{equation}%\label{eq-6Nat}
\begin{split}
D&=\left(\begin{array}{rrrrrr}-1 & 1 & 0 & \cdots & 0 & 0 \\0 & -1 & 1 & \cdots & 0 & 0 \\ \, & \, & \, & \ddots & \, & \, \\0 & 0 & 0 & \cdots & -1 & 1\end{array}\right)\,,\\
D_b^\textsc{t}&=\left(\begin{array}{rrrrrr}-1 & 0 & 0 & \cdots & 0 & 0 \\0 & 0 & 0 & \cdots & 0 & 1\end{array}\right)\,.
\end{split}
\end{equation}

Implementing the primal-dual wedge product as a scalar multiplication of the coefficient vectors and by taking for convenience $\vec{e}_b=\left(e_b(v_0),-e_b(v_{2n})\right)^\textsc{t}$, the simplicial Dirac structure of (\ref{eq:Dir-prim-dualMET}) can be represented by
\begin{equation}%\label{eq-6Nat}
\begin{split}
\left(\begin{array}{c}\vec{f}_p \\ \vec{f}_q \\\vec{e}_b\end{array}\right)=\left(\begin{array}{ccc}0 & D & 0 \\-D^\textsc{t} & 0 & D_{b} \\0 & -D_{b}^\textsc{t} & 0\end{array}\right)\left(\begin{array}{c}\vec{e}_p \\ \vec{e}_q \\\vec{f}_b\end{array}\right)\,.
\end{split}
\end{equation}

Expressing flows in terms of efforts and choosing $L=C=1$ for convenience, the matrix formulation of (\ref{eq:Dir-prim-dualMET}) is
\begin{equation}\label{eq:Dir-prim-dualMETmatrix}
\begin{split}
-M_p \dot{\vec{e}}_p&= D \vec{e}_q\\
-\widehat{M}_q \dot{\vec{e}}_q&= -D_i^\textsc{t}\vec{e}_p + D_b \vec{f}_b\\
\vec{e}_b&= -D_b^\textsc{t} \vec{e}_q\,,
 \end{split}
\end{equation}
where $\vec{e}_p\in \Lambda_0$, $\vec{e}_q\in \Lambda^1$, $M_p =\mathrm{diag}\!\left( {h}_1, {h}_3,\dots,{h}_{2n-1} \right)\in \mathbb{R}^{n\times n}$ and ${\widehat{M}}_q  =\mathrm{diag}\!\left( \hat{h}_0, \hat{h}_2,\dots,\hat{h}_{2n} \right)\in \mathbb{R}^{(n+1)\times (n+1)}$, with $h_1=|[v_0,v_2]|$, $h_3=|[v_2,v_4]|$, \dots, $h_{2n-1}=|[v_{2n-2},v_{2n}]|$ and $\hat{h}_0=|[\hat{v}_0,\hat{v}_1]|$, $\hat{h}_2=|[\hat{v}_1,\hat{v}_3]|$, \dots, $\hat{h}_{2n}=|[\hat{v}_{2n-1},\hat{v}_{2n}]|$.

\vspace{0.15cm}
\noindent{\textbf{Error analysis.}} The time derivatives of the $\vec{R}_p$ components are
\begin{equation*}
\begin{split}
\dot{\vec{R}}_{p,l} &= h_{2l-1} \dot{e}_p(\hat{v}_{2l-1}) - \int_{[v_{2l-1},v_{2l}]} * \dot{e}_p^{\mathrm{c}}=h_{2l-1} \dot{e}_p(\hat{v}_{2l-1})-\int_{v_{2l-1}}^{v_{2l}}\dot{e}_p^{\mathrm{c}}(z)\mathrm{d}z
\end{split}
\end{equation*}
The Taylor's expansion of $\dot{e}_p^{\mathrm{c}}$ around $\hat{v}_{2l-1}$ is
\begin{equation*}
\begin{split}
\dot{e}_p^{\mathrm{c}} (z)=\dot{e}_p^{\mathrm{c}} (\hat{v}_{2l-1}) + \frac{\partial \dot{e}_p^{\mathrm{c}}}{\partial z}(\hat{v}_{2l-1})(z-\hat{v}_{2l-1})+\frac{\partial^2 \dot{e}_p^{\mathrm{c}}}{\partial z^2}(\hat{v}_{2l-1})\frac{(z-\hat{v}_{2l-1})^2}{2}+O(z^3)\,
\end{split}
\end{equation*}
Thus
\begin{equation*}
\begin{split}
\int_{v_{2l-2}}^{v_{2l}}\dot{e}_p^{\mathrm{c}} (z) \mathrm{d}z=\dot{e}_p^{\mathrm{c}} (\hat{v}_{2l-1}) h_{2l-1} + \frac{1}{3}\frac{\partial^2 \dot{e}_p^{\mathrm{c}}}{\partial z^2}(\hat{v}_{2l-1})\left(\frac{\hat{h}_{2l}}{2}\right)^3+O(h_{2l}^4)\,,
\end{split}
\end{equation*}
and
\begin{equation*}
\begin{split}
\dot{\vec{R}}_{p,l}&= \frac{1}{3}\frac{\partial^2 \dot{e}_p^{\mathrm{c}}}{\partial z^2}(\hat{v}_{2l-1})\left(\frac{h_{2l}}{2}\right)^3+O(h_{2l}^4)\,.
\end{split}
\end{equation*}

Likewise,
\begin{equation*}
\begin{split}
\dot{\vec{R}}_{q,k+1}&=\hat{h}_{2k} \dot{e}_q({v}_{2k})-\int_{[\hat{v}_{2k-1},\hat{v}_{2k+1}]}* \dot{e}_q^{\mathrm{c}}=\hat{h}_{2k} \dot{e}_q({v}_{2k})-\int_{\hat{v}_{2k-1}}^{\hat{v}_{2k+1}}\dot{e}_q^{\mathrm{c}}(z)\mathrm{d}z\,,
\end{split}
\end{equation*}
where $k=0,1,\dots, n$ and $\hat{v}_{-1}=\hat{v}_0$. In the similar fashion, we obtain
\begin{equation*}
\begin{split}
\dot{\vec{R}}_{q,k+1}&= \frac{1}{3}\frac{\partial^2 \dot{e}_q^{\mathrm{c}}}{\partial z^2}({v}_{2k})\left(\frac{\hat{h}_{2k}}{2}\right)^3+O(\hat{h}_{2k}^4)~\,\mathrm{for}\,~k=1,\dots,n-1\\
\dot{\vec{R}}_{q,1}&= \frac{1}{2}\frac{\partial \dot{e}_q^{\mathrm{c}}}{\partial z}({v}_{0}) \hat{h}_{0}^2 + O(\hat{h}_{0}^3)\\
\dot{\vec{R}}_{q,n+1}&= -\frac{1}{2}\frac{\partial \dot{e}_q^{\mathrm{c}}}{\partial z}({v}_{2n})\hat{h}_{2n}^2 + O(\hat{h}_{2n}^3)\,.
\end{split}
\end{equation*}

It follows that there exist $K_1, K_2\in \mathbb{R}$ such that
\begin{equation}\label{errorteleqs}
\begin{split}
 \| \delta{\vec{e}}_p(t^*)\| + \|   \delta \vec{e}_q(t^*)\| &\leq h K_2 \int_0^{t_f} \| T\frac{\partial \dot{{e}}_q^{\mathrm{c}}}{\partial z} (\tau)\| \mathrm{d} \tau\\
 &~~~+h^2 K_1\int_0^{t_f} \left(\| \frac{\partial^2 \dot{{e}}_p^{\mathrm{c}}}{\partial z^2} (\tau)\|+\| \frac{\partial^2 \dot{{e}}_q^{\mathrm{c}}}{\partial z^2} (\tau)\|\right ) \mathrm{d} \tau +O(h^3)\,
\end{split}
\end{equation}
where $T$ is a trace operator, $T=D_b^\textsc{t}$, and $h=\min \{h_1, h_3,\dots, h_{2n-1}, \hat{h}_0, \hat{h}_2,\dots, \hat{h}_{2n}\}$.

\vspace{0.15cm}
\noindent{\textbf{Numerical example.}} To evince how exactly the discrete model relates to the continuous one, we take the example from Section 5 in \cite{Golo}. 

The spatial domain of the transmission system is the line segment $M=[0,\mathrm{e}-1]$. The distributed capacitance and the distributive inductance are $z\mapsto C^{\mathrm{c}}(z)=\frac{1}{1+z}$ and $z\mapsto L^{\mathrm{c}}(z)=\frac{1}{1+z}$, $z\in M$. On the left-hand side a causal input voltage $t\mapsto u(t)$ is assigned, and at the other end the transmission line is terminated by a load of unit resistance, meaning $*q(t,\mathrm{e}-1)=*\phi(t,\mathrm{e}-1)$. Initial conditions are assumed to be zero, i.e. $q^{\mathrm{c}}(0,z)=0$ and $\phi^{\mathrm{c}}(0,z)=0$ for $z\in Z$. The exact solution for the voltage distribution is $(t,z)\mapsto e_p^{\mathrm{c}}(t,z)=u(t-\mathrm{ln}(z+1))$, for $t\geq0$.

\begin{figure}%[!h]
\centering
      \includegraphics[width=12.5cm]{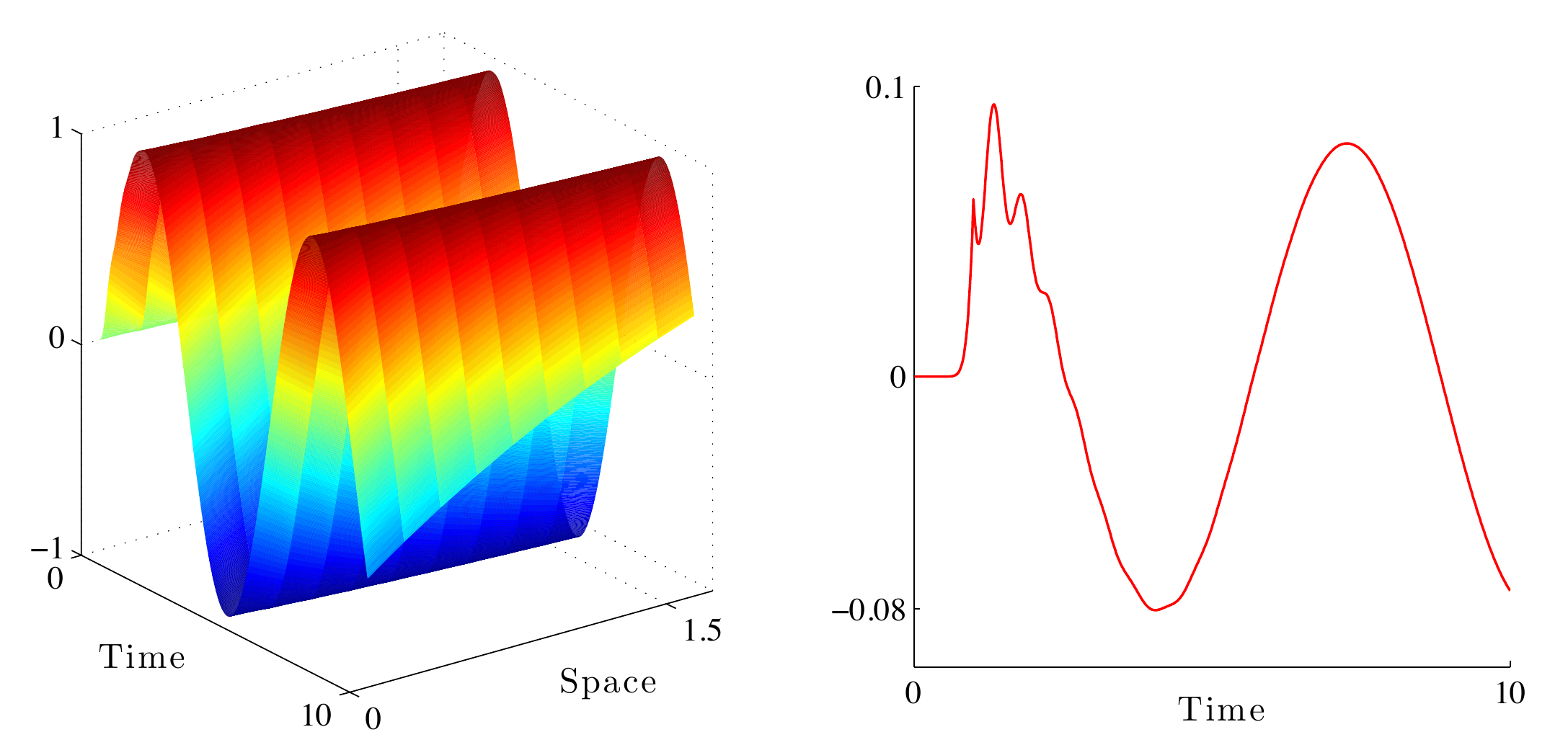}
  \caption{On the left, the voltage distribution $e_p$ for $n=10$, and on the right, the voltage error at the point $z=\mathrm{e}-1$, that is $e_b(v_{20})(t)-\sin(t-1)$ for $t\geq 0$.}\label{fig:NumRes}
\end{figure}

Using equidistant division of $M$ and diagonal Hodge operators, the results of numerical simulation when the input $e_p^{\mathrm{c}}(0,t)=u(t)=\sin t$, $t\geq 0$, are given in Figure~\ref{fig:NumRes}. The time integration technique is Runge-Kutta 4 and the integration step is $0.01$.

All numerical experiments indicate that the discrepancy between the exact value of the voltage and the value obtained by numerical simulation is the greatest at the spatial point $z=\mathrm{e}-1$. Thence, in the left-hand side of Figure~\ref{fig:NumRes} we show this error as a function of time. In all computational experiments, this error, similar to the results in \cite{Golo}, exhibits an oscillatory behavior with the amplitude not exceeding the maximum displayed in the first period.

Repeating simulation experiments for uniform grids of different densities indicates that the accuracy of the proposed method is $1/n$, what comes as no surprise since we worked with diagonal Hodge operators, which are of first-order accuracy as shown in (\ref{errorteleqs}) for the system (\ref{eq:Dir-prim-dualMETmatrix}).

\section{Concluding Remarks}\label{Sec:conclusion}
In the framework of discrete exterior calculus, we have established the theoretical foundation for formulation of time-continuous spatially-discrete port-Hamiltonian systems. The staple fiber of our approach is the formulation of the simplicial Dirac structures as discrete analogues of the Stokes-Dirac structure. These discrete finite-dimensional Dirac structures are the foundation for the definition of open finite-dimensional systems with Hamiltonian dynamics. Such an approach to discretization transfers the essential topological, geometrical, and physical properties from distributed-parameter systems to their finite-dimensional analogues. By preserving the Hamiltonian structure, this methodology utilizes the analysis and control synthesis for the discretized systems.

A number of interesting topics and open questions still need to be addressed. Here we provide a few miscellaneous reflections and some comments on future work.

\vspace{0.15cm}
\noindent{\textbf{Numerical aspects.}} The discrete exterior calculus employed in this paper is founded on the idea of a simplicial complex and its circumcentric dual. While for some problems Delaunay triangulation is desirable since it reduces the maximum aspect of the mesh, for others the construction of circumcentric duals might be too expensive (see \cite{Shewchuk} and references therein). This motivates the development of a discrete calculus on non-simplicial complex meshes, such as a general CW complex \cite{Hatcher} or a rectangular scheme. Although the latter might be inappropriate for geometrically complex objects, a potential advantage would be its conceptual simplicity since the circumcentric dual is again a rectangular mesh.

A major challenge from the numerical analysis standpoint is to offer a careful study of the convergence properties of discrete exterior calculus. Furthermore, it would be desirable to have higher-order discrete analogues of the smooth geometric operators. This primarily pertains to deriving higher-accuracy Hodge star operators, which would possibly in return make structure-preserving discretization more competitive even in the domains where structure is put aside. A recent article \cite{Arnold} reports some significant initial results regarding stability of finite element exterior calculus. The abstract theory is applied to linear elliptic partial differential equations with intention to capture the key structure of de Rham cohomology and as such mainly pertains to the vanishing boundary constraints. Another related publication \cite{Holst} extends the framework of \cite{Arnold} to approximate domains. In the future, in the context of \cite{Arnold,Holst}, it would be interesting to study structure-preserving discretization of port-Hamiltonian systems in the framework of Hilbert complexes.

\vspace{0.15cm}
\noindent
{\textbf{Open discretized systems.}} The Stokes-Dirac structure has proven to be successful in capturing the essential geometry behind many open systems with Hamiltonian dynamics. The concept of the Stokes-Dirac structure as presented in the introduction in order to accommodate some port-Hamiltonian systems, such as the ideal isentropic fluid, needs to be augmented \cite{vdSM02}. The main idea behind these modifications remains to be based on the Stokes theorem. From a structure-preserving discretization point of view, there appears not to be any impediments; nonetheless, in order to discuss these questions in a systematic manner, a unified theory of open infinite-dimensional Hamiltonian systems is needed. The main novelty in discretizing some so formulated general underlying structures might concern their integrability. The simplicial Dirac structures formulated in this paper are constant Dirac structures and as such they satisfy the usual integrability conditions \cite{Courant, Dalsmo, Dorfman}.

An important application of structure-preserving discretization of port-Hamiltonian systems might be in (optimal) control theory, what also prompts a need for time discretization. For closed Hamiltonian systems, it is well-known that asynchronous variational integrators in general cannot preserve the Hamiltonian exactly; however, these integrators, for small time steps, can preserve a nearby Hamiltonian up to exponentially small errors \cite{Lewetal, Lewetal1,Marsden, Marsden1, Marsden2}. An important issue in this context is to study the effects these integrators have on passivity (and losslessness) of open dynamical systems.

\vspace{0.15cm}
\noindent{\textbf{Covariant formulation.}} It is known that, for instance, Maxwell's equations are also consonant with multisymplectic structure since they can be derived from the Hamiltonian variational principle \cite{Abraham,Marsden}. The multysymplectic structure behind Maxwell's equations, unlike the Stokes-Dirac structure, is defined, not on a spatial manifold $M$, but on a spacetime manifold $X$. Here we need to notice that one could define a Stokes-Dirac type structure on a pseudo-Riemannian, say Lorentzian, manifold. In Lorentzian spacetime, the forms $E$ and $B$ can be combined into a single object, the Faraday $2$-form $F=E\wedge \mathrm{d} t+B$. The form $F$ can also be expressed in terms of the electromagnetic potential $1$-form $A$ as $F=\mathrm{d} A$. The Hodge star of $F$ is a dual $2$-form $G=* F=H\wedge \mathrm{d} t-D$, known as the Maxwell's $2$-form. The charge density $\rho$ and current density $J$ can be combined into the source $3$-form $j=J\wedge \mathrm{d} t-\rho$. A well-known relativistically covariant formulation of Maxwell's equations \cite{Abraham} is: $\mathrm{d} F=0$ and $\mathrm{d} G=j$.

In order to relate this formulation to the port-Hamiltonian framework, define the following Stokes-Dirac structure on a Lorentzian manifold $X$ by
\begin{equation}
\begin{split}
\mathcal{D_L}=\big\{& (f_p,f_q,f_b,e_p,e_q,e_b)\in \\
&\Omega^2(X)\times \Omega^3(X)\times \Omega^2(\partial X)\times \Omega^2(X)\times \Omega^1(X)\times \Omega^1(\partial X) \big |\\
&f_p=-\mathrm{d} e_q\,, f_q=\mathrm{d}e_p\,, e_b=-e_q|_{\partial X}\,,f_b=e_p|_{\partial X}\big \}\,.
 \end{split}
\end{equation}
Similar to the proof of Theorem 1, it is easy to verify that $\mathcal{D_L}=\mathcal{D_L}^\perp$, with respect to a natural bilinear form. The dynamics of Maxwell's equations can now be imposed by setting $f_p=F$, $f_q=j$, $e_p=A$, and $e_q=G$. Furthermore, since $\mathrm{d}^2=0$, it follows that $\mathrm{d}f_p=\mathrm{d}\mathrm{d} e_q=0$.

A natural choice for discretization of the structure $\mathcal{D_L}$, in the context of discrete exterior calculus, would be on a simplicial $4$-complex. This would insure a completely covariant formulation of discrete Maxwell's equations, similar to Regge's formalism for producing simplicial approximations of spacetime in numerical general relativity. The relativistic effects in most engineering applications are however negligible, hence for these purposes, by choosing a time coordinate, we can split the Lorentzian manifold into $3+1$ space, whose discrete analogue is a prismal cell complex. Similar to discretization of multisymplectic structures, this would lead to a certain type of asynchronous variational integrator.

An important and challenging avenue for future work is to make an explicit relation between multisymplectic and Stokes-Dirac structures, and then to compare their discrete analogues.

\end{document}